\newtheorem{lemma}{Lemma}
\newtheorem{proof}{Proof}[section]
\begin{document}
	
	\title{{Reflections in the Sky: Joint Trajectory and Passive Beamforming Design for Secure UAV Networks with Reconfigurable Intelligent Surface }}
	
\author{
	\IEEEauthorblockN{Hui~Long$^{1}$,
		Ming Chen$^{1}$,
		Zhaohui Yang$^{2}$,
		Bao Wang$^1$,Zhiyang~Li$^1$,
		 Xu Yun$^1$, and Mohammad Shikh-Bahaei$^{2}$}\\
	\IEEEauthorblockA{$^1$National Mobile Communications Research Laboratory,
		Southeast University, Nanjing 210096, China.\\
		$^2$Centre for Telecommunications Research, Department of Informatics, King's College London, London WC2B 4BG, UK\\
		Emails: longhui@seu.edu.cn, chenming@seu.edu.cn, yang.zhaohui@kcl.a.uk, 220180884@seu.edu.cn,lizhiyang@seu.edu.cn,  xuyun@seu.edu.cn and m.sbahaei@kcl.ac.uk
}}
	
	\maketitle
	\pagestyle{empty}
	\thispagestyle{empty}
	\begin{abstract}
		This paper investigates a problem of secure energy efficiency maximization for a reconfigurable intelligent surface (RIS) assisted uplink wireless communication system, where an unmanned aerial vehicle (UAV) equipped with an RIS works as a mobile relay between the base station (BS) and a group of users. We focus on maximizing the secure energy efficiency of the system via jointly optimizing the UAV's trajectory, the RIS's phase shift, user association and transmit power. To tackle this problem, we divide the original problem into three sub-problems, and propose an efficient iterative algorithm. In particular, the successive convex approximation (SCA) method is applied to solve the nonconvex UAV trajectory, the RIS's phase shift, and transmit power optimization sub-problems. We further provide two schemes to simplify the solution of phase and trajectory sub-problem. Simulation results demonstrate that the proposed algorithm converges fast, and the proposed design can enhance the secure energy efficiency by up to 38\% gains, as compared to the traditional schemes without any RIS.
	\end{abstract}
	
	\begin{IEEEkeywords}
	UAV communications, reconfigurable intelligent surface, secure communication.	
	\end{IEEEkeywords}

	\IEEEpeerreviewmaketitle
	\vspace{-0.75em}
	\section{Introduction}
	Unmanned aerial vehicle (UAV) is playing an increasingly important role in line of sight (LoS) instant communication\cite{x2019physical}. Typical applications of UAV-assisted communication have established the safe LoS communication links with ground nodes utilizing its flexible networking structure and feasible low deployment cost \cite{Y2016throughout,7063641,j2017placement,Caching7875131,yang8379427,zhou8434285,wang2019uav,yang8764580}. Achieving secure transmission of confidential information and avoiding eavesdropping have always been an important problem in the design of wireless communication systems. In the existing research on physical layer security, the communication nodes are static which means the channel quality of eavesdropper-base station (BS) or legal user-BS mainly depends on the location of eavesdropper and legal users. If the distance between the BS and the legitimate receiver/edvesdropper is fixed, the achievable secrecy rate will be limited even if techniques such as artificial noise (AN) and power control are applied. However, UAV not only establishes stronger legitimate links with the ground nodes by designing its trajectory, but also can detect any potential eavesdroppers by equipping with optical cameras. As a result, in this paper, we focus on the application of UAV on the physical layer security.
	
   A number of existing works such as \cite{G2017securing,li2019joint,8525328} have studied on security for various UAV communication
   systems. Considering a typical three-node eavesdropping scenario, the work in \cite{G2017securing} improved the average secrecy rate through optimizing the UAV trajectory under total transmit power constraint of the BS. Considering the general multi-user scenario, the authors in \cite{li2019joint} maximized the minimum secrecy rate by controlling the user association as well as considering the trajectory and transmit power. Different from the works on the ground to air communication system in 2-dimensional (2D) space \cite{G2017securing}\cite{li2019joint}, the work in \cite{8525328} investigated the more complicated air-to-air (A2A) systems in 3-dimensional (3D) space. By characterizing the statistical characteristics of the signal-to-noise-ratio (SNR) over the A2A links, the authors in \cite{8525328} obtained the closed-form expressions for secrecy outage probability.

    Recently, reconfigurable intelligent surface (RIS) has been widely used in improving the energy efficiency and communication equality in wireless networks \cite{li2020reconfigurable,huang8741198,xu9024490,yang2020energyefficient,yang2020RSMARIS,huang2019holographic,pan9090356,pan2019intelligent}. An RIS is made up of a number of configurable elements that can reflect the incident signal by controlling its phase shifts appropriately. Different from the traditional amplify and forward (AF) relays, RIS is almost passive, and does not incur energy cost\cite{e2019woreless}. Besides, an RIS will not induce or amplify noise while reflecting signals. It is worth noticing that there have been several works on RIS taking the secrecy into consideration including multiple-input single-output (MISO) systems\cite{8743496,8847342,9014322,8723525} and multiple-input multiple-output (MIMO) systems \cite{8972406}. The authors in \cite{8743496} maximized the power of received signal under the transmit power and the unit modulus constraints, and further improved secrecy rate for a  MISO system. For the MIMO case, the authors in \cite{8972406} proposed an iterative optimization method to maximize the secrecy rate with respect to the RIS's phase shift coefficient and the transmit covariance. However, the aforementioned works in \cite{8743496,8847342,9014322,8723525,8972406} do not exploit the RIS's characteristic in the mobile UAV-enabled communication system even though UAV can further improve the secrecy rate performance. Meanwhile, energy efficiency has emerged as an important performance index for deploying green and sustainable wireless networks \cite{Chen2019A},\cite{8846706}. In consequence, it is of importance to investigate the application of RIS in improving secrecy energy efficiency, which is defined as the ratio of the minimum secrecy rate to the total power consumption.

    Motivated by the previous works, we investigate the secrecy energy efficiency of an UAV-enabled system by taking the advantages of RIS in this paper.  Considering the complex outdoor environment, tall building may block the LoS communication links between ground users and BS which seriously affects the channel quality. To improve the channel quality between users and the BS, an UAV relay with one RIS is considered in this paper. Our goal is to maximize the secrecy energy efficiency. The main contributions of this paper are summarized as follows:
     \begin{itemize}
     	\item  To solve this nonconvex secrecy energy efficiency maximization problem, an alternating method is proposed with solving three sub-problems iteratively. For the integer user association sub-problem, it is relaxed to a linear problem. For the power control sub-problem, the successive convex approximation (SCA) method is adopted. For the phase and trajectory optimization sub-problem, the optimal phase to maximize the user rate is derived first, and then an alternating method is proposed by analyzing the convexity of the secrecy rate expression with respect to the trajectory variable.
     	\item To compare the method for the phase and trajectory optimization sub-problem, we also provide the detailed procedures to jointly optimize phase and trajectory  via SCA.
     	\item  Simulation results show that our proposed approach can enhance the secrecy energy efficiency by up to 38\% gains compared to the conventional AF relay scheme.
     \end{itemize}
The rest of this paper is organized as follows. System model and problem formulation are described in Section \uppercase\expandafter{\romannumeral2}. Section \uppercase\expandafter{\romannumeral3} provides the  algorithm design and analysis. Section \uppercase\expandafter{\romannumeral4} presents the simulation results to demonstrate the performance of the proposed algorithm. Conclusions are drawn in Section \uppercase\expandafter{\romannumeral5}.

	\section{System Model and Problem Formulation}
	
	\subsection{System Model}
	\begin{figure}[t]
		\centering
		\includegraphics[width=3.5in]{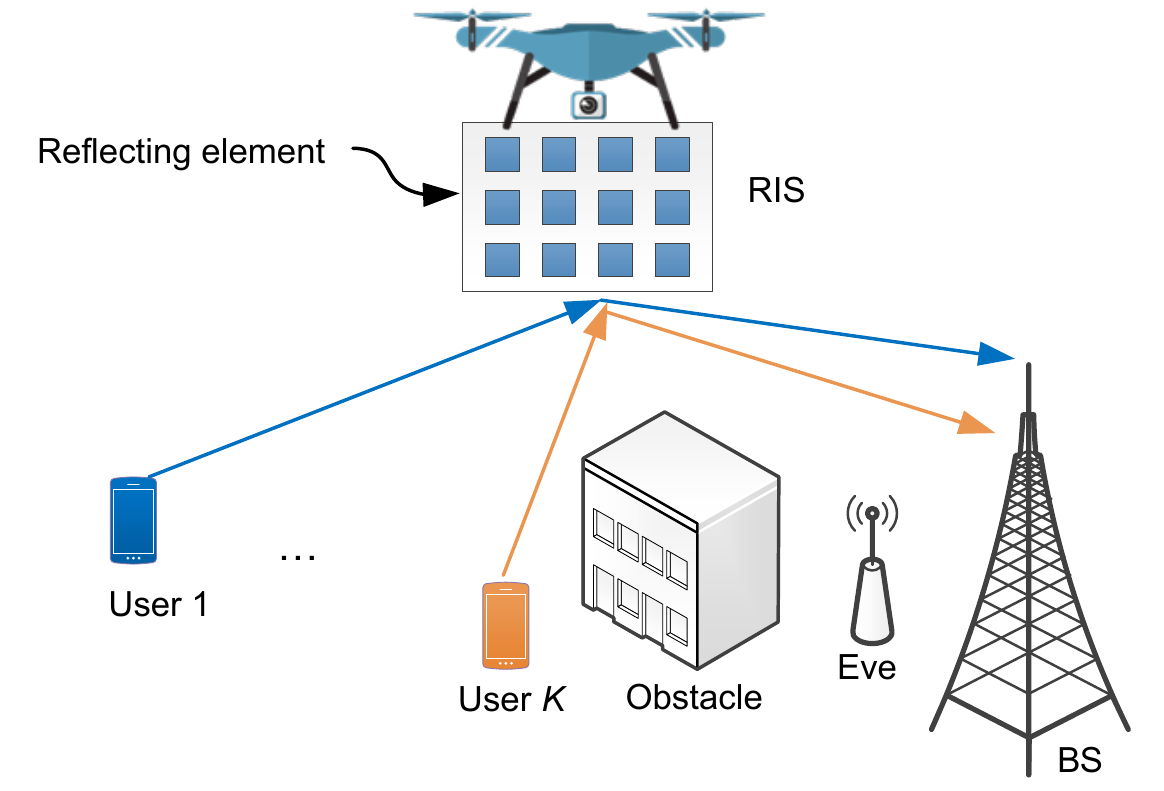}
		\caption{An uplink RIS-assisted wireless communication system.} \label{figsys1}
	\end{figure}
	
	Consider an uplink wireless communication system with one UAV, one eavesdropper (Eve), $K$ users, and one BS, as shown in Fig. \ref{figsys1}.
	The set of $K$ users is denoted by $\mathcal K =\{1, \cdots, K\}$.
	Due to the obstacle of high walls, there is non LoS channel between the BS and each user.
	The UAV equipped with one RIS serves as a passive relay to assist the communication between the users and the BS.
	The RIS is equipped with a uniform linear array (ULA) of $M$ reflecting elements and the phase of each element can be controlled by the UAV.
	
	Eve, all users, and the BS are located at the ground.
	The horizontal coordinates of user $k$ where $k \in \mathcal K$ , Eve, and BS are denoted by $\boldsymbol w_k=[x_k, y_k]^T$, $\boldsymbol w_{e}=[x_{e}, y_{e}]^T$, and $\boldsymbol w_{b}=[x_{b}, y_{b}]^T$, respectively.
	In this system, multiple users are served in different time intervals.
	The UAV flies at a fixed altitude $H$ with flight period $T$.
	To facilitate the analysis,
	the UAV flight period $T$ is divided into $N$ equally-spaced time slots with step size $\delta$, i.e., $T=N\delta$.
	Denote $\mathcal N=\{1, \cdots, N\}$ as the set of all discrete time slots.
	The time-variant horizontal coordinate of the UAV in time slot $n$ is denoted by $\boldsymbol q[n]=[x[n],y[n]]^T$, $n\in\mathcal N$.
	
	To serve the users periodically, the UAV needs to return back to the initial position by the end of period $T$, i.e.,
	\begin{equation}\label{sys1eq1}
	\boldsymbol q[N]=\boldsymbol q[0],
	\end{equation}
	where $\boldsymbol q[0]=[x[0],y[0]]^T$ is the predetermined initial horizontal coordinate of the UAV.
	With given maximal UAV speed $V_{\max}$, the number of time slots $N$ can be chosen properly such that the time for UAV location changing within time $\delta$ can be negligible.
	As a result, we have
	\begin{equation}\label{sys1eq2}
	\|\boldsymbol q[n]-\boldsymbol q[{n-1}]\| \leq S_{\max},
	\end{equation}
	where $S_{\max}=V_{\max}\delta$ is the maximal horizontal distance that the UAV can travel within one time slot.
	
	The channel gain between user $k$ and the UAV in time slot $n$ can be expressed as \cite{li2020reconfigurable}
	\begin{equation}\label{sys1eq3_1}
	\boldsymbol g_k[n]=\sqrt{h_0 d_{k}^{-\alpha}[n]}[1, e^{-j\frac{2\pi d}{\lambda}\phi_k[n]},\cdots,e^{-j\frac{2(M-1)\pi d}{\lambda}\phi_k[n]}]^T,
	\end{equation}
	where $h_0$ is the channel gain at a reference distance $d_0=1$ m, $d_{k}[n]=\sqrt{\|\boldsymbol q[n]-\boldsymbol  w_k\|^2+H^2}$ is the  distance between user $k$ and the UAV in time slot $n$, $\alpha\geq2$ is the pathloss exponent,
	$\phi_k[n]=\frac{x[n]-x_k}{d_{k}[n]}$
	represents the cosine of the angle of
	arrival (AoA) of the signal from user $k$ to the ULA at the
	RIS in  time slot $n$,
	$d$ is the antenna separation, and $\lambda$ is
	the carrier wavelength.
	
	Similarly, the channel gain between the UAV and the BS in time slot $n$ can is given by
	\begin{equation}\label{sys1eq3_2}
	\boldsymbol g_b[n]=\sqrt{h_0 d_{b}^{-\alpha}[n]}[1, e^{-j\frac{2\pi d}{\lambda}\phi_b[n]},\cdots,e^{-j\frac{2(M-1)\pi d}{\lambda}\phi_b[n]}]^T,
	\end{equation}
	where $d_{b}[n]=\sqrt{\|\boldsymbol q[n]-\boldsymbol  w_b\|^2+H^2}$ and
	$\phi_b[n]=\frac{x[n]-x_b}{d_{b}[n]}$.
	The channel gain between the UAV and Eve in time slot $n$ can be expressed as
	\begin{equation}\label{sys1eq3_3}
	\boldsymbol g_e[n]=\sqrt{h_0 d_{e}^{-\alpha}[n]}[1, e^{-j\frac{2\pi d}{\lambda}\phi_e[n]},\cdots,e^{-j\frac{2(M-1)\pi d}{\lambda}\phi_e[n]}]^T,
	\end{equation}
	where $d_{e}[n]=\sqrt{\|\boldsymbol q[n]-\boldsymbol  w_e\|^2+H^2}$ and
	$\phi_e[n]=\frac{x[n]-x_e}{d_{e}[n]}$.
	
	Let the binary variable $a_k[n]$ denote the association of user $k$ in time slot $n$, i.e., $a_k[n]=1$ represents that user $k$ is associated with the UAV; otherwise $a_k[n]=0$.
	Assume that at most one user is served in each time slot, i.e.,
	\begin{equation}
	\sum_{k=1}^K a_k[n] \leq 1, \quad \forall n\in\mathcal N
	\end{equation}
	Based on \eqref{sys1eq3_1} and \eqref{sys1eq3_2}, the achievable rate from user $k$ to the BS via RIS in time slot $n$ can be given by
	\begin{equation}\label{sys1eq5}
	r_k[n]= \log_2 \left(1+\frac{ p_k[n] | \boldsymbol g_{b}^H[n]\boldsymbol\Theta[n] \boldsymbol g_k[n]|^2}{\sigma^2}\right),
	\end{equation}
	where
	$\boldsymbol \Theta[n]$ is the phase shift matrix of the RIS in time slot $n$,
	$\sigma^2$ is the noise power, and $p_k[n]$ is the transmit power of user $k$ in time slot $n$. Matrix $\boldsymbol \Theta[n]=\text{diag} (\text e^{j\theta_1[n]}, \cdots, \text e^{j\theta_M[n]})\in\mathbb C^{M\times  M}$  with $\theta_{m}[n]\in[0,2\pi]$, which captures the effective phase shifts applied by all reflecting elements of the RIS.
	If user $k$ transmits data to the BS in time slot $n$, the achievable rate at Eve is
	\begin{equation}\label{sys1eq6}
	c_k[n]= \log_2 \left(1+\frac{ p_k[n] | \boldsymbol g_{e}^H[n]\boldsymbol\Theta[n] \boldsymbol g_k[n]|^2}{\sigma^2}\right).
	\end{equation}
	
	According to \cite{li2019joint}, the secrecy rate of user $k$ in time slot $n$ can be expressed as
	\begin{equation}\label{sys1eq6}
	R_k[n]=a_k[n]\max\{r_k[n]-c_k[n],0\}.
	\end{equation}
	Considering the fairness among all users, we provide the minimum secrecy rate of all users as follow
	\begin{equation}\label{sys1eq6}
	\zeta=\min_{k\in\mathcal K} \frac 1 N \sum_{n=1}^NR_k[n].
	\end{equation}
	
	\subsection{Problem Formulation}
	Our aim is to maximize the fair secrecy energy efficiency, via joint trajectory, transmit power, and passive beamforming optimization.
	Mathematically, the optimization problem can be formulated as
	\begin{subequations}\label{max1}
		\begin{align}
		\!\!\mathop{\max}_{\zeta,\boldsymbol{Q}, \boldsymbol A, \boldsymbol{P}, \boldsymbol\Theta}\quad
		& \frac{\zeta}{\sum_{k=1}^K \sum_{n=1}^N p_k[n]+P_0}\\
		\textrm{s.t.}\quad\:\:
		& \zeta \leq \frac 1 N \sum_{n=1}^NR_k[n], \quad \forall k\in\mathcal K\\
		&\|\boldsymbol q[n]-\boldsymbol q[{n-1}]\| \leq S_{\max}, \quad \forall n\in\mathcal N\\
		&\sum_{k=1}^K a_k[n] \leq 1, \quad \forall n\in\mathcal N\label{max1_aa}\\
		&a_{k}[n]\in\{0,1\},  \quad \forall n\in\mathcal N,k\in\mathcal K\label{max1_a}\\
		&p_k[n] \leq P_k, \quad \forall n\in\mathcal N,k\in\mathcal K\\
		& \boldsymbol q[N]=\boldsymbol q[0],\\
		&\theta_{m}[n] \in[0,2\pi], \quad \forall m \in\mathcal M,n\in\mathcal N,
		\end{align}
	\end{subequations}
	where $\boldsymbol Q=\{\boldsymbol q[n]\}_{\forall n}$,
	$\boldsymbol A=\{ a_k[n]\}_{\forall k,n}$,
	$\boldsymbol P=\{ p_k[n]\}_{\forall k,n}$,
	$\boldsymbol \Theta=\{ \theta_m[n]\}_{\forall m,n}$,
	$P_0$ is the circuit power consumption of the system,
	$P_k$ is the maximum transmit power of user $k$,
	and $\mathcal M=\{1, \cdots, M\}_{\forall m \in\mathcal M}$.

    \section{Proposed Algorithm}
	 Problem \eqref{max1} is a nonconvex problem due to the nonconvex and discrete constraints. In this section, a sub-optimal solution that contains the SCA and alternating methods is provided to tackle with  problem \eqref{max1}.
	\subsection{User Association Optimization}
    To make problem \eqref{max1} tractable, we relax the binary variables in \eqref{max1_a} into continuous variables. Thus, with given transmit power $\boldsymbol P $, RIS's phase shift matrix $\boldsymbol \Theta$, and UAV trajectory $\boldsymbol Q$, the user association problem can be optimized by solving the following problem
    \begin{subequations}\label{maxa}
		\begin{align}
		\!\!\mathop{\max}_{\zeta, \boldsymbol A}\quad
		& \zeta\\
		\textrm{s.t.}\quad\:\:
		& \zeta \leq \frac 1 N \sum_{n=1}^NR_k[n], \quad \forall k\in\mathcal K\\
		&\sum_{k=1}^K a_k[n] \leq 1, \quad \forall n\in\mathcal N\\
		&0\leq a_{k}[n]\leq 1,  \quad \forall n\in\mathcal N,k\in\mathcal K.\label{maxa1}\end{align}
	\end{subequations}
    Problem \eqref{maxa} is a standard linear programming and can be solved efficiently by existing optimization tools such as CVX. We can obtain the optimal solution of problem \eqref{maxa}, and then using the rounding method to further get the integer solution.

    \subsection{Power Optimization}
    With fixed user association $\boldsymbol A $, UAV trajectory $\boldsymbol Q $, and RIS's phase shift matrix $\boldsymbol \Theta$, the transmit power optimization problem reduces to
    \begin{subequations}\label{maxp}
		\begin{align}
		\!\!\mathop{\max}_{\zeta, \boldsymbol{P}}\quad
		& \frac{\zeta}{\sum_{k=1}^K \sum_{n=1}^N p_k[n]+P_0}\\
		\textrm{s.t.}\quad\:\:
		& \zeta \leq \frac 1 N \sum_{n=1}^NR_k[n], \quad \forall k\in\mathcal K\label{maxp1}\\
		&p_k[n] \leq P_k, \quad \forall n\in\mathcal N,k\in\mathcal K.\label{maxp2}\end{align}
	\end{subequations}
     Problem \eqref{maxp} is nonconvex as $R_k[n]$ in the right hand side of constraint \eqref{maxp1} is a
     difference of two concave functions with respect to the power
     control variables $p_k[n]$. We can use the SCA method to solve problem \eqref{maxp} by sequentially solving a series of convex approximation problems. Let $p_k^{(r)}[n]$ denote the transmit power of user $k$ in the $r$-th iteration of SCA method. As the $c_k[n]$ is the concave function of $p_k[n]$, we have
     \begin{align}\label{sca}
     \notag
     c_k[n]&= \log_2 \left(1+\frac{ p_k[n] | \boldsymbol g_{e}^H[n]\boldsymbol\Theta[n] \boldsymbol g_k[n]|^2}{\sigma^2}\right)\\
     \notag
     &\leq \log_2 \left(1+\frac{ p_k^{(r)}[n] | \boldsymbol g_{e}^H[n]\boldsymbol\Theta[n] \boldsymbol g_k[n]|^2}{\sigma^2}\right)\\
     \notag
     &+\frac{| \boldsymbol g_{e}^H[n]\boldsymbol\Theta[n] \boldsymbol g_k[n]|^2\left(p_k[n]-p_k^{(r)}[n]\right)}{\ln2\left(\sigma^2+p_k^{(r)}[n]| \boldsymbol g_{e}^H[n]\boldsymbol\Theta[n] \boldsymbol g_k[n]|^2\right)}\\
     &\triangleq\hat{c}_k[n].
	 \end{align}
     Replacing $c_k[n]$ with $\hat{c}_k[n]$, problem \eqref{maxp} is equivalent to
\begin{subequations}\label{maxpnew1}
	 \begin{align}
		\!\!\mathop{\max}_{\zeta, \boldsymbol{P}}\quad
		& \frac{\zeta}{\sum_{k=1}^K \sum_{n=1}^N p_k[n]+P_0}\\
		\textrm{s.t.}\quad\:\:
		& \zeta \leq \frac 1 N \sum_{n=1}^Na_k[n]\max\{r_k[n]-\hat{c}_k[n],0\}, \quad \forall k\in\mathcal K\label{maxpnew1-1}\\
		&p_k[n] \leq P_k, \quad \forall n\in\mathcal N,k\in\mathcal K.\label{maxpnew1-2}\end{align}
\end{subequations}
    Problem \eqref{maxpnew1} is a nonlinear fractional programming\cite{dinkelbach1967on}. We solve this problem by adding a multiplication factor $\eta$, and problem \eqref{maxpnew1} is approximated as the following problem
\begin{subequations}\label{maxpnew2}
	 \begin{align}
		\!\!\mathop{\max}_{\zeta, \boldsymbol{P}}\quad
		& \zeta-\eta\left(\sum_{k=1}^K \sum_{n=1}^N p_k[n]+P_0\right)\\
		\textrm{s.t.}\quad\:\:
		& \zeta \leq \frac 1 N \sum_{n=1}^Na_k[n]\max\{r_k[n]-\hat{c}_k[n],0\}, \forall k\in\mathcal K\label{maxpnew2-1}\\
		&p_k[n] \leq P_k, \quad \forall n\in\mathcal N,k\in\mathcal K\label{maxpnew2-2},\end{align}
\end{subequations}
    where $\eta$ can be obtained by Algorithm 1. It is proved that the solution of problem \eqref{maxpnew1} can be obtained by solving problem \eqref{maxpnew2} according to reference \cite{dinkelbach1967on}.
  \begin{algorithm}[!t]
  	\caption{Iterative Algorithm for Problem \eqref{maxp} }
  	\label{alg:1}
  	\begin{algorithmic}[1]
  		\State \textbf{Initialize} $\zeta^{(0)}$, $\boldsymbol P^{(0)}$, the tolerance $\varepsilon$, and the iteration number $t=1$.
  		\State \textbf{repeat}
  		\State Calculate
  		\vspace{-.5em}
  		\begin{equation}\vspace{-.5em}
  		\eta^{(t)}=\frac{\zeta^{(t-1)}}{\sum_{k=1}^K \sum_{n=1}^N p_k^{(t-1)}[n]+P_0}.
  		\end{equation}
  		\State Solve the following optimization problem
  		\begin{small}
  			\begin{align}\label{maxpend}
  			\notag
  			&\left({{\zeta^{(t)}},{\boldsymbol{P}^{(t)}}}\right)=\mathop{\arg\max}_{\zeta, \boldsymbol{P}}\quad\zeta-\eta^{(t)}\left(\sum_{k=1}^K \sum_{n=1}^N p_k[n]+P_0\right)\\
  			\notag
  			\textrm{s.t.}\:\:
  			& \zeta \leq \frac 1 N \sum_{n=1}^N\max a_k[n]\{r_k[n]-\hat{c}_k[n],0\}, \quad \forall k\in\mathcal K\\
  			&p_k[n] \leq P_k, \quad \forall n\in\mathcal N,k\in\mathcal K.\end{align}
  		\end{small}
  		\State Update $t = t+1$.
  		\State \textbf{until} $\left|\zeta^{(t)}-\zeta^{(t-1)}\right|\leq\varepsilon$.
  	\end{algorithmic}
  \end{algorithm}

The suboptimal solution of problem \eqref{maxp} can be obtained through the iterative algorithm in Algorithm 1. Note that problem \eqref{maxpend} in Algorithm 1 is a convex problem, which can be easily solved by the standard toolbox, such as CVX.

     \subsection{Joint Phase and UAV Trajectory Optimization (Scheme I)}
     For any given user association $\boldsymbol A $ and transmit power $\boldsymbol P $, the optimization problem of UAV trajectory $\boldsymbol{Q}$ and RIS's phase shift matrix $\boldsymbol \Theta$ can be reformulated as
     \begin{subequations}\label{maxq}
		\begin{align}
		\!\!\mathop{\max}_{\zeta,\boldsymbol{Q},\boldsymbol \Theta}\quad
		& \zeta\\
		\textrm{s.t.}\quad\:\:
		& \zeta \leq \frac 1 N \sum_{n=1}^NR_k[n], \quad \forall k\in\mathcal K\\
		&\|\boldsymbol q[n]-\boldsymbol q[{n-1}]\| \leq S_{\max}, \quad \forall n\in\mathcal N\\
		& \boldsymbol q[N]=\boldsymbol q[0]\\
        &\theta_{m}[n] \in[0,2\pi],   \quad  \forall   m \in\mathcal M,n\in\mathcal N.
		\end{align}
	\end{subequations}
   Problem \eqref{maxq} can be solved in two steps: passive beamforming optimization and UAV trajectory optimization. 
   \subsubsection{Passive Beamforming Optimization}
   For the optimization of $\boldsymbol \Theta$, we can align the phases of the received signal  at BS to maximize the received signal energy. 

    Firstly, considering the optimization of $\boldsymbol \Theta\ $ with any given $\boldsymbol{Q}$, expression $\boldsymbol g_{b}^H[n]\boldsymbol\Theta[n] \boldsymbol g_k[n]$ can be written as
    \begin{equation}\label{theta1}
    \boldsymbol g_{b}^H[n]\boldsymbol\Theta[n] \boldsymbol g_k[n]=\frac{h_0\sum\limits_{m=1}^Me^{j\left(\theta_m[n]+\frac{2(m-1)\pi d}{\lambda}\left(\phi_b[n]-\phi_k[n]\right)\right)}}{\sqrt{d^{\alpha}_b[n]d^{\alpha}_k[n]}}.
    \end{equation}
    To maximize the received signal energy, we combine the signals from different paths coherently at BS. Thus, we set $\theta_1[n]=\theta_2[n]+\frac{2\pi d}{\lambda}\left(\phi_b[n]-\phi_k[n]\right)=...=\theta_M[n]+\frac{2\pi(M-1) d}{\lambda}\left(\phi_b[n]-\phi_k[n]\right)=\omega$, or re-expressed as
    \begin{equation}\label{theta6}
    \theta_m[n]=\frac{2\pi(m-1) d}{\lambda}\left(\phi_k[n]-\phi_b[n]\right)+\omega,\quad \forall m,n,k
    \end{equation}
    where $\omega \in [0,2\pi]$. In that way, we achieve the phase alignment of the received signal and further maximize the received signal energy. Thus, $\boldsymbol g_{b}^H[n]\boldsymbol\Theta[n] \boldsymbol g_k[n]$ can be rewritten as
    \begin{equation}\label{theta2}
    \boldsymbol g_{b}^H[n]\boldsymbol\Theta[n] \boldsymbol g_k[n]=\frac{h_0Me^{j\omega}}{\sqrt{d^{\alpha}_b[n]d^{\alpha}_k[n]}}.
    \end{equation}
    Based on \eqref{theta2}, the achievable rate of user $k$ in time slot $n$ can be rewritten as
    \begin{equation}\label{theta3}
    r_k[n]=\log_2\left(1+\frac{B}{d^{\alpha}_b[n]d^{\alpha}_k[n]}\right),
    \end{equation}
    where $B=\frac{p_k[n]\left| h_0\right| ^2M^2}{\sigma^2}$.

    Given the phase shift results in \eqref{theta6}, $\left|\boldsymbol g_{e}^H[n]\boldsymbol\Theta[n] \boldsymbol g_k[n]\right|^2$ can be written as
    \begin{equation}\label{theta4}
    \left|\boldsymbol g_{e}^H[n]\boldsymbol\Theta[n] \boldsymbol g_k[n]\right|^2
    =\frac{\left| h_0\right| ^2C^2}{d^{\alpha}_k[n]d^{\alpha}_e[n]},
    \end{equation}
    where $C = \left|\sum\limits_{m=1}^Me^{j\left(\theta_m[n]+\frac{2(m-1)\pi d}{\lambda}\left(\phi_e[n]-\phi_k[n]\right)\right)}\right|$.

    The effective channel gain between the user and Eve in \eqref{theta4} is a complicated function of the UAV trajectory. To make the problem tractable, we provide the following upper bound, i.e.,
    \begin{equation}\label{theta5}
    \left|\boldsymbol g_{e}^H[n]\boldsymbol\Theta[n] \boldsymbol g_k[n]\right|^2\leq\frac{\left| h_0\right| ^2M^2}{d^{\alpha}_k[n]d^{\alpha}_e[n]}.
    \end{equation}
\subsubsection{UAV Trajectory Optimization}
    Based on \eqref{theta3} and \eqref{theta5}, we first introduce slack variables $\boldsymbol{Z}=\left\{z_k[n]\right\}_{\forall k,n}$ and $\boldsymbol{V} =  \left\{v_k[n]\right\}_{\forall k,n} $. Problem \eqref{maxq} can be reformulated as
        \begin{subequations}\label{maxq1}
		\begin{align}
		\!\!\mathop{\max}_{\zeta,\boldsymbol{Q},\boldsymbol{Z},\boldsymbol{V}}\quad
		& \zeta\\
		\textrm{s.t.}\quad\:\:
		& \zeta \leq\frac 1 N \sum_{n=1}^Na_k[n]\max\left\{r_k[n]-c_k[n],0\right\},\forall k\in\mathcal K\label{maxq2}\\
		&\|\boldsymbol q[n]-\boldsymbol q[{n-1}]\| \leq S_{\max}, \quad \forall n\in\mathcal N\\
		& \boldsymbol q[N]=\boldsymbol q[0]\\
        &z^{\frac{2}{\alpha}}_k[n]\geq d_k^2[n]d_b^2[n], \quad \forall n\in\mathcal N,k\in\mathcal K\label{maxq3}\\
        &v^{\frac{2}{\alpha}}_k[n]\leq d_k^2[n]d_e^2[n], \quad \forall n\in\mathcal N,k\in\mathcal K,\label{maxq4}
		\end{align}
	\end{subequations}
where $r_k[n]\!=\!\log_2\left(1+\frac{B}{z_k[n]}\right)$ and  $c_k[n]\!=\!\log_2\left(1+\frac{B}{v_k[n]}\right)$. To maximize the  objective value $\zeta$, constraints \eqref{maxq3} and \eqref{maxq4} must hold with equality at the optimal solution of problem \eqref{maxq1}. To solve problem \eqref{maxq1}, we introduce an important lemma as follows.
    \begin{lemma}
     Given $B>0$, function $r_k[n]$ is convex  with respect to $z_k[n]>0$.
    \end{lemma}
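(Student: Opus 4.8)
The plan is to prove convexity directly, by computing the second derivative of $r_k[n]=\log_2\left(1+\frac{B}{z_k[n]}\right)$ with respect to $z_k[n]$ and showing it is strictly positive on the whole domain $z_k[n]>0$. For brevity I would write $z=z_k[n]$ and use $\log_2(\cdot)=\frac{1}{\ln 2}\ln(\cdot)$, so that $f(z)=\frac{1}{\ln 2}\ln\left(1+\frac{B}{z}\right)$; the positive constant $\frac{1}{\ln 2}$ does not affect the sign of any derivative.

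First I would compute $f'(z)$. Applying the chain rule to $\ln(1+B/z)$, whose inner function $B/z$ has derivative $-B/z^2$, and using $1+B/z=(z+B)/z$ to collapse the fraction, I expect
\begin{equation}
f'(z)=\frac{1}{\ln 2}\cdot\frac{-B/z^2}{1+B/z}=-\frac{B}{\ln 2}\cdot\frac{1}{z(z+B)}.
\end{equation}
Differentiating once more, I would write $f'(z)=-\frac{B}{\ln 2}\,(z^2+Bz)^{-1}$ and apply the chain rule to $(z^2+Bz)^{-1}$, obtaining
\begin{equation}
f''(z)=\frac{B}{\ln 2}\cdot\frac{2z+B}{(z^2+Bz)^2}.
\end{equation}

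The final step is to read off the sign. Under the hypotheses $B>0$ and $z>0$, every factor is strictly positive: $B>0$, $\ln 2>0$, $2z+B>0$, and $(z^2+Bz)^2>0$. Hence $f''(z)>0$ for all $z>0$, which establishes that $r_k[n]$ is (strictly) convex in $z_k[n]$, as claimed.

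I do not anticipate a serious obstacle; the argument is a routine two-step differentiation. The only place requiring care is the algebraic simplification of $f'$, namely combining the $-B/z^2$ produced by the chain rule with the factor $1/(1+B/z)$ so that the denominator collapses cleanly to $z(z+B)$, since a sign or bookkeeping slip there would propagate into $f''$. I would deliberately avoid the tempting decomposition $r_k[n]=\log_2(z+B)-\log_2 z$, because this writes $r_k[n]$ as a difference of two concave functions and is therefore inconclusive for convexity; the direct second-derivative computation is both cleaner and decisive.
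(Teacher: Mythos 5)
Your proof is correct and follows essentially the same route as the paper's Appendix A: compute the first and second derivatives of $\log_2\left(1+\frac{B}{z}\right)$ and observe that the second derivative, $\frac{B(2z+B)}{\ln 2\,\bigl(z^2+Bz\bigr)^2}$, is strictly positive for $z>0$ and $B>0$ (the paper writes the constant as $A$, but the expressions agree). Your closing remark about avoiding the difference-of-concave-functions decomposition is a sensible observation, though not needed for the argument itself.
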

    \begin{proof}
    See Appendix A.
   \end{proof}
   Using the same method in Appendix A, $c_k[n]$ can be proved to be convex with respect to $v_k[n]$.

   With Lemma 1, constraint \eqref{maxq2} is the difference of two convex functions. Set $z^{(l)}_k[n]$ as the given points $\boldsymbol{Z}=\left\{z_k[n]\right\}_{\forall k,n}$ in the $l$-th iteration, we obtain the following lower bound for $r_k[n]$ in \eqref{maxq2}, i.e., 
   \begin{align}\label{maxq5}
   \notag
   r_k[n]& =\log_2\left(1+\frac{B}{z_k[n]}\right)\\
   & \geq
   \notag \log_2\left(1+\frac{B}{z^{(l)}_k[n]}\right)\\
   &+\frac{-B\left(z_k[n]-z^{(l)}_k[n]\right)}{z^{(l)}_k[n](z^{(l)}_k[n]+A)\ln2}\triangleq\hat{r}_k[n].
   \end{align}
   Therefore, the nonconvexity of constraint \eqref{maxq2} can be handled based on \eqref{maxq5}.  Before tackling constraint \eqref{maxq3} and \eqref{maxq4}, we show another important lemma as follows.
   \begin{lemma}
   	$d_k^2[n]$, $d_k^4[n]$ are convex functions of $\boldsymbol{q}[n]$.
   \end{lemma}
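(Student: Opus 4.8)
The plan is to work directly from the definition $d_k^2[n]=\|\boldsymbol{q}[n]-\boldsymbol{w}_k\|^2+H^2$ and to establish convexity of each function by exhibiting a positive semidefinite Hessian with respect to the two-dimensional variable $\boldsymbol{q}[n]=[x[n],y[n]]^T$. Since convexity is preserved under the relevant compositions, I expect both claims to reduce to short computations.

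First I would treat $d_k^2[n]$. Expanding gives $d_k^2[n]=(x[n]-x_k)^2+(y[n]-y_k)^2+H^2$, which is a sum of two convex quadratics plus a constant. Its Hessian is $2\boldsymbol{I}$, which is positive definite, so $d_k^2[n]$ is convex in $\boldsymbol{q}[n]$.

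Next, for $d_k^4[n]=\bigl(d_k^2[n]\bigr)^2=\bigl(\|\boldsymbol{q}[n]-\boldsymbol{w}_k\|^2+H^2\bigr)^2$ I would give two routes. The clean route is the composition rule: the scalar map $t\mapsto t^2$ is convex and nondecreasing on $[0,\infty)$, while $d_k^2[n]$ is convex and takes values in $[H^2,\infty)\subseteq[0,\infty)$; hence the composition $d_k^4[n]$ is convex. The direct route writes $\boldsymbol{u}=\boldsymbol{q}[n]-\boldsymbol{w}_k$ and computes the gradient $\nabla d_k^4=4(\|\boldsymbol{u}\|^2+H^2)\boldsymbol{u}$ and the Hessian $\nabla^2 d_k^4=4(\|\boldsymbol{u}\|^2+H^2)\boldsymbol{I}+8\boldsymbol{u}\boldsymbol{u}^T$. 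The first term is a positive multiple of the identity (since $\|\boldsymbol{u}\|^2+H^2\geq H^2>0$) and the second term is positive semidefinite, so $\nabla^2 d_k^4\succ 0$ and $d_k^4[n]$ is convex.

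The main obstacle here is bookkeeping rather than anything conceptual. For the composition argument one must verify the monotonicity and nonnegativity hypotheses, both of which are immediate from $d_k^2[n]\geq H^2>0$; for the Hessian argument the only care needed is the correct differentiation of the product $(\|\boldsymbol{u}\|^2+H^2)\boldsymbol{u}$. I expect no genuine difficulty, and either computation settles the lemma.
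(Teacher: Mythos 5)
Your proposal is correct, and for $d_k^4[n]$ it offers one route the paper uses and one it does not. Your treatment of $d_k^2[n]$ (Hessian equals $2\boldsymbol{I}$) is exactly the paper's argument. Your ``direct route'' for $d_k^4[n]$ is also the paper's argument, merely written in matrix form: your Hessian $4\left(\|\boldsymbol{u}\|^2+H^2\right)\boldsymbol{I}+8\boldsymbol{u}\boldsymbol{u}^T$ agrees entrywise with the second-order partials computed in Appendix B, but you certify positive definiteness by splitting off a positive multiple of the identity plus a positive semidefinite rank-one term, whereas the paper checks the leading principal minors of the $2\times 2$ Hessian. Incidentally, the paper's determinant expression $16d_k^4[n]+32\left(x[n]-x_k\right)^2\left(y[n]-y_k\right)^2$ contains a small arithmetic slip --- the correct value is $16d_k^4[n]+32d_k^2[n]\left(\left(x[n]-x_k\right)^2+\left(y[n]-y_k\right)^2\right)$ --- though both are positive, so the lemma is unaffected; your decomposition sidesteps this kind of bookkeeping error entirely. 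What is genuinely different is your composition argument, which the paper does not use: $t\mapsto t^2$ is convex and nondecreasing on $[0,\infty)$, and $d_k^2[n]$ is convex with range in $[H^2,\infty)$, so $d_k^4[n]=\left(d_k^2[n]\right)^2$ is convex. This is shorter, requires no differentiation, and generalizes immediately (to $d_k^{2m}[n]$ for any integer $m\geq 1$ and to UAV positions of any dimension, hence also to $d_b^4[n]$ and $d_e^4[n]$ without repeating the computation); the explicit Hessian, on the other hand, buys quantitative information the abstract rule does not, namely $\nabla^2 d_k^4[n]\succeq 4H^2\boldsymbol{I}$, i.e., strong convexity. The one hypothesis where the composition route could silently fail is the monotonicity of $t\mapsto t^2$ on the range of the inner function, and you verified it correctly via $d_k^2[n]\geq H^2>0$.
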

   \begin{proof}
	See Appendix B.
\end{proof}
   Lemma 2 is proved by checking the Hessian matrix of $d_k^2[n]$, $d_k^4[n]$.
   Similarly, $d_b^2[n]$, $d_e^2[n]$, $d_b^4[n]$, $d_e^4[n]$ are all convex functions of $\boldsymbol{q}[n]$.

   Problem \eqref{maxq1} is still nonconvex due to the nonconvex terms $d_k^2[n]d_b^2[n]$ and $d_e^2[n]d_b^2[n]$. We can obtain the upper bound function of $d_k^2[n]d_b^2[n]$ via its first-order Taylor expansion at any given point $d^{(l)}_b[n]$, $d^{(l)}_k[n]$, $d^{(l)}_e[n]$ and $\boldsymbol{q}^{(l)}[n]$, i.e.,
   \begin{align}\label{maxq7}
   \notag
   &d_k^2[n]d_b^2[n]\\
   \notag
   &=\frac 1 2\left[\left(d_k^2[n]+d_b^2[n]\right)^2-\left(d_k^4[n]+d_b^4[n]\right)\right]\\
   \notag
   &\leq \frac 1 2\left[\left(d_k^2[n]+d_b^2[n]\right)^2-\left(\left(d_k^{(l)}[n]\right)^4+\left(d_b^{(l)}[n]\right)^4\right)\right]\\
   \notag
   &-2\left(d_k^{(l)}[n]\right)^2\left(\boldsymbol{q}^{(l)}[n]-\boldsymbol{w}_k\right)^T\cdot\left(\boldsymbol{q}[n]-\boldsymbol{q}^{(l)}[n]\right)\\
   \notag
   &-2\left(d_b^{(l)}[n]\right)^2\left(\boldsymbol{q}^{(l)}[n]-\boldsymbol{w}_b\right)^T\cdot\left(\boldsymbol{q}[n]-\boldsymbol{q}^{(l)}[n]\right)\\
   &\triangleq f\left(\boldsymbol{q}[n]\right).
   \end{align}
   Similarly, the lower bound function of  $d_k^2[n]d_e^2[n]$ can be written as
   \begin{align}\label{maxq6}
   \notag
   &d_k^2[n]d_e^2[n]\\
   \notag
   &=\frac 1 2\left[\left(d_k^2[n]+d_e^2[n]\right)^2-\left(d_k^4[n]+d_e^4[n]\right)\right]\\
   \notag
   &\geq \frac 1 2 \left[\left(\left(d_k^{(l)}[n]\right)^2+\left(d_b^{(l)}[n]\right)^2\right)^2-\left(d_k^4[n]+d_e^4[n]\right)\right]\\
   \notag
   &+2\left(d_k^{(l)}[n]\right)^2\left(2\boldsymbol{q}^{(l)}[n]-\boldsymbol{w}_k-\boldsymbol{w}_e\right)^T\cdot\left(\boldsymbol{q}[n]-\boldsymbol{q}^{(l)}[n]\right)\\
   \notag
   &+2\left(d_e^{(l)}[n]\right)^2\left(2\boldsymbol{q}^{(l)}[n]-\boldsymbol{w}_k-\boldsymbol{w}_e\right)^T\cdot\left(\boldsymbol{q}[n]-\boldsymbol{q}^{(l)}[n]\right)\\
   &\triangleq g\left(\boldsymbol{q}[n]\right).
   \end{align}

  Note that $v^{\frac{2}{\alpha}}_k[n]$ is concave, the first-order Taylor expansions of $v^{\frac{2}{\alpha}}_k[n]$ at the given point $v^{(l)}_k[n]$ is applied to make constraint \eqref{maxq4} feasible for convex optimization, thus
   \begin{align}\label{maxq9}
   v^{\frac{2}{\alpha}}_k[n]
  \leq
  \notag
  &\left(v^{(l)}_k[n]\right)^{\frac{2}{\alpha}}+\frac{2}{\alpha}\left(v^{(l)}_k[n]\right)^{\frac{2}{\alpha}-1}\left(v_k[n]-v^{(l)}_k[n]\right)\\
   \triangleq &h\left(v_k[n]\right).
   \end{align}

   With \eqref{maxq5}-\eqref{maxq9}, the optimization of UAV trajectory can be formulated as
        \begin{subequations}\label{maxq8}
		\begin{align}
		\!\!\mathop{\max}_{\zeta,\boldsymbol{Q},\boldsymbol{Z},\boldsymbol{V}}\quad
		& \zeta\\
		\textrm{s.t.}\quad\:\:
		& \zeta \leq \frac 1 N \sum_{n=1}^Na_k[n]\left[\hat{r}_k[n]-\log_2\left(1+\frac{B}{v_k[n]}\right)\right]^+\\
		&\|\boldsymbol q[n]-\boldsymbol q[{n-1}]\| \leq S_{\max}, \quad \forall n\in\mathcal N\\
		& \boldsymbol q[N]=\boldsymbol q[0]\\
        &z^{\frac{2}{\alpha}}_k[n]\geq f\left(\boldsymbol{q}[n]\right)\quad \forall n\in\mathcal N,k\in\mathcal K\\
        &h\left(v_k[n]\right)\leq g\left(\boldsymbol{q}[n]\right)\quad \forall n\in\mathcal N,k\in\mathcal K.
		\end{align}
	\end{subequations}
   Since all the constraints in \eqref{maxq8} are convex and the objective function is linear, problem \eqref{maxq8} is convex, which can be effectively solved via the conventional methods such as CVX. Note that problem \eqref{maxq} also can be solved by joint optimization with SCA method (scheme II) as shown in Appendix C.

   \subsection{Overall Algorithm and Analysis}

   In summary, the overall algorithm for solving problem \eqref{max1} is given in Algorithm 2.
   The convergence as well as complexity of the proposed Algorithm 2 are given as follows. 
    \begin{lemma}
   	Algorithm 2 is guaranteed to converge.
   \end{lemma}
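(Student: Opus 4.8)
The plan is to prove Lemma~3 by the classical monotone-bounded-sequence argument: I would show that the secrecy energy efficiency objective of problem \eqref{max1} is non-decreasing along the outer iterations of Algorithm~2, and that it is bounded from above. A monotonically non-decreasing sequence that is bounded above necessarily converges, which yields the claim.

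First I would verify monotonicity block by block. Let $\mathrm{EE}^{(t)}=\zeta/(\sum_{k,n}p_k[n]+P_0)$ denote the objective value at the end of outer iteration $t$, during which the three blocks are updated in turn. For the user association step, the linear program \eqref{maxa} is solved to optimality with the other blocks fixed, so $\zeta$, and hence $\mathrm{EE}$, cannot decrease. For the power step, Algorithm~1 is a Dinkelbach procedure whose inner convex problem \eqref{maxpend} is built from the surrogate $\hat c_k[n]$ in \eqref{sca}; since $\hat c_k[n]$ is a first-order upper bound of the concave function $c_k[n]$ that is exact at the expansion point $p_k^{(r)}[n]$, the previous power iterate stays feasible for \eqref{maxpend} and the approximate secrecy rate lower-bounds the true one, so the SCA iterates remain feasible for \eqref{maxp} and yield a non-decreasing objective while the update of $\eta$ drives the fractional program to its optimum per \cite{dinkelbach1967on}. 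For the phase-and-trajectory step, problem \eqref{maxq8} is assembled from the tight surrogates $\hat r_k[n]$ in \eqref{maxq5}, $f(\cdot)$ in \eqref{maxq7}, $g(\cdot)$ in \eqref{maxq6}, and $h(\cdot)$ in \eqref{maxq9}. Here I would invoke Lemma~1 (and its stated analogue for $c_k[n]$) to make \eqref{maxq2} a difference of convex functions, so that the tangent $\hat r_k[n]$ is a valid exact lower bound, and Lemma~2 (together with the analogous convexity of $d_b^2,d_e^2,d_b^4,d_e^4$) to validate the bounds $f$ and $g$. Because every surrogate is a global bound that is exact at the current linearization point, the feasible region of \eqref{maxq8} is a subset of that of \eqref{maxq} and the previous iterate satisfies all approximate constraints with equality; hence the optimal value of \eqref{maxq8} is no smaller than the previous $\zeta$ and the returned point is feasible for \eqref{maxq}. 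Chaining the three updates gives $\mathrm{EE}^{(t+1)}\ge\mathrm{EE}^{(t)}$.

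Next I would establish the upper bound. The transmit powers obey $p_k[n]\le P_k$ and the distances satisfy $d_k[n],d_b[n],d_e[n]\ge H>0$, so each $r_k[n]$ is finite and $\zeta$ is bounded above, while the denominator is bounded below by $P_0>0$. Therefore $\mathrm{EE}^{(t)}$ is upper-bounded by a finite constant, and combined with monotonicity the sequence $\{\mathrm{EE}^{(t)}\}$ converges, proving the lemma.

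The step I expect to be the main obstacle is the per-block monotonicity rather than the boundedness. The delicate points are that each SCA surrogate must be simultaneously a global bound and exact at the linearization point, so that feasibility is preserved and no objective value is lost when passing from the convex approximation back to the nonconvex subproblem, and that the nested Dinkelbach and SCA loops inside the power step do not disturb the overall monotone trend. A further subtlety is the integer rounding applied after the relaxed association \eqref{maxa}: strictly, the non-decreasing property holds for the continuous relaxation, so I would argue convergence for the relaxed objective (which dominates any feasible integer objective) and treat the rounding as a final projection that does not affect convergence of the iterates. I would also note that consistent handling of the $[\cdot]^+$ operator in the secrecy-rate constraints is needed to keep the surrogate inequalities valid throughout.
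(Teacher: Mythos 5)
Your overall skeleton (monotone non-decreasing objective plus a finite upper bound, hence convergence) is the same as the paper's Appendix D, and your treatment of the association step (LP solved exactly) and the power step (Dinkelbach outer loop with the tight SCA surrogate $\hat c_k[n]$ from \eqref{sca}, matching the paper's chain $(d)$--$(f)$) is sound. But there is a genuine gap in your argument for the phase-and-trajectory block. You assert that ``every surrogate is a global bound that is exact at the current linearization point,'' and this is false for one key ingredient: the bound \eqref{theta5} replaces the eavesdropper's effective array gain $C^2$ (with $C$ defined below \eqref{theta4}) by the worst-case value $M^2$. This is a uniform bound, not a first-order expansion, and it is generically \emph{strict} at the current iterate, since the phase choice \eqref{theta6} aligns the RIS to the BS direction rather than to Eve, so $C<M$ in general. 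Consequently the previous iterate, evaluated inside problem \eqref{maxq8}, can have a surrogate secrecy rate strictly \emph{below} its true secrecy rate; the chain ``true value at new point $\geq$ surrogate optimum $\geq$ surrogate value at old point'' then no longer closes against the true value at the old point, and monotonicity of this block cannot be concluded. (A second, related issue: the phase update \eqref{theta6} applied in step 6 of Algorithm 2 maximizes $r_k[n]$ but does not control $c_k[n]$, so the combined update $(\boldsymbol Q^{(t+1)},\boldsymbol\Theta^{(t+1)})$ carries no intrinsic improvement guarantee either.)

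The paper does not attempt to prove what you claim; instead it \emph{enforces} monotonicity algorithmically. Step 8 of Algorithm 2 is a rollback: if $\Gamma^{(t+1)}<\Gamma^{(t)}$, it resets $\boldsymbol Q^{(t+1)}=\boldsymbol Q^{(t)}$ and $\boldsymbol\Theta^{(t+1)}=\boldsymbol\Theta^{(t)}$, and the paper's inequality $(c)$ in \eqref{convergence} is justified by exactly this safeguard, not by surrogate tightness. To repair your proof you would either have to invoke that rollback step explicitly, or restrict the tightness claim to the surrogates that genuinely possess it ($\hat r_k[n]$ in \eqref{maxq5}, $f$ in \eqref{maxq7}, $g$ in \eqref{maxq6}, $h$ in \eqref{maxq9}) and concede that \eqref{theta5} only yields a lower bound on the achieved objective, with monotonicity recovered via the reset. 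Your side remark about the integer rounding after \eqref{maxa} is a fair observation --- the paper's inequality $(a)$ silently assumes the relaxed LP solution --- but it is secondary to the missing safeguard, which is the step on which the paper's proof actually turns.
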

   \begin{proof}
   	See Appendix D.
   \end{proof}
    
    From Algorithm 2, the complexity of solving problem \eqref{max1} is dominated by the complexity of solving three sub-problems : user association sub-problem \eqref{maxa},  power control sub-problem \eqref{maxp}, the phase and trajectory optimization sub-problem \eqref{maxq}. 
      
   The complexity of user association sub-problem is the  total number of variables $KN$, as problem \eqref{maxa} is a standard linear programing problem. The power control sub-problem is solved by SCA method and its complexity depends on its variables and constraints. Since there are $KN+1$ constraints in problem \eqref{maxp}, the number of iteration required for SCA method is $\mathcal{O}\left(  \sqrt{KN+1} \log_2{(1/\epsilon_1})\right) $, where $\epsilon_1$ is the accuracy of SCA method for solving problem \eqref{maxp}. Note that  $S_1 = KN+1$ is the total number of variables and $S_2 = K(N+1)$ is the total number of constraints. Thus, the complexity of solving problem \eqref{maxpnew2} at each iteration is $\mathcal{O} \left(S_1^2S_2\right)$. In consequence, the complexity of solving problem \eqref{maxp} is $\mathcal{O}\left((KN+1)^{2.5}(KN+K)\log_2{(1/\epsilon_1}) \right) $ ( For simplicity, it can be equivalent to $\mathcal{O}\left((KN)^{3.5}\log_2{(1/\epsilon_1}) \right) $. We only provide the simplidied forms in the following analysis.). In scheme I, the complexity for solving problem \eqref{maxq} is dominated by the  complexity of solving problem \eqref{maxq8} $\mathcal{O}\left((2KN)^{3.5}\log_2{(1/\epsilon_2}) \right) $, where $\epsilon_2$ is the accuracy of SCA method for solving problem \eqref{maxq8}. Similarly, in scheme II the complexity of solving problem \eqref{newmaxqall1_1} is $\mathcal{O}\left(((6K+M)N)^{3.5}\log_2{(1/\epsilon_3} \right) $,  where $\epsilon_3$ is the accuracy of SCA method for solving problem \eqref{newmaxqall1_1}.
   
   In conclusion, the total complexity of Algorithm 2 with scheme I for solving problem \eqref{max1} is \begin{small}$\mathcal{O} \left(S_3\left( KN+(KN)^{3.5}\log_2{(1/\epsilon_1})+(2KN)^{3.5}\log_2{(1/\epsilon_2})\right)  \right) $\end{small}, where $S_3$ is the number of iteration for Algorithm 2. It is obvious that the complexity of the proposed scheme I is lower than that of the proposed scheme II.
  \begin{algorithm}[!t]
  	\caption{Proposed Algorithm for Problem \eqref{max1} }
  	\label{alg:2}
  	\begin{algorithmic}[1]
  		\State \textbf{Initialize} $\boldsymbol P^{(0)}$, $\boldsymbol Q^{(0)}$, $\boldsymbol\Theta^{(0)}$, $\Gamma^{(0)}$, the tolerance $\varepsilon$, the iteration number $t=0$, the convergence accuracy $\epsilon$.
  		\State \textbf{repeat}
  		\State \quad With given $\left(\boldsymbol P^{(t)}, \boldsymbol Q^{(t)}, \boldsymbol\Theta^{(t)}\right)$, obtain $\boldsymbol A^{(t+1)}$ by solving problem \eqref{maxa}.
  		\State \quad With given $\left(\boldsymbol A^{(t+1)},\boldsymbol Q^{(t)},\boldsymbol\Theta^{(t)}\right)$,obtain $\boldsymbol P^{(t+1)}$ by Algorithm 1.
  		\State \quad With given $\left(\boldsymbol A^{(t+1)},\boldsymbol P^{(t+1)}\right)$, obtain $\boldsymbol Q^{(t+1)}$ by solving problem \eqref{maxq8}.
  		\State \quad With $\boldsymbol Q^{(t+1)} $, update $\boldsymbol\Theta^{(t+1)}$ by using \eqref{theta6}.
  		\State \quad Calculate $\Gamma^{(t+1)}=\frac{\zeta^{(t+1)}}{\sum_{k=1}^K \sum_{n=1}^N p^{(t+1)}_k[n]+P_0}$.
  		\State \quad If $\Gamma^{(t+1)}< \Gamma^{(t)}$, we set $Q^{(t+1)}=Q^{(t)}, \Theta^{(t+1)}=\Theta^{(t)}$.
  		\State \quad Set $t \leftarrow t+1$.
  		\State \textbf{Until} $\left|\Gamma^{(t+1)}-\Gamma^{(t)} \right|\leq\epsilon$.
  	\end{algorithmic}
  \end{algorithm}

\section{Numerical Results}

\begin{figure}[t]
	\centering
	\includegraphics[width=3.5in]{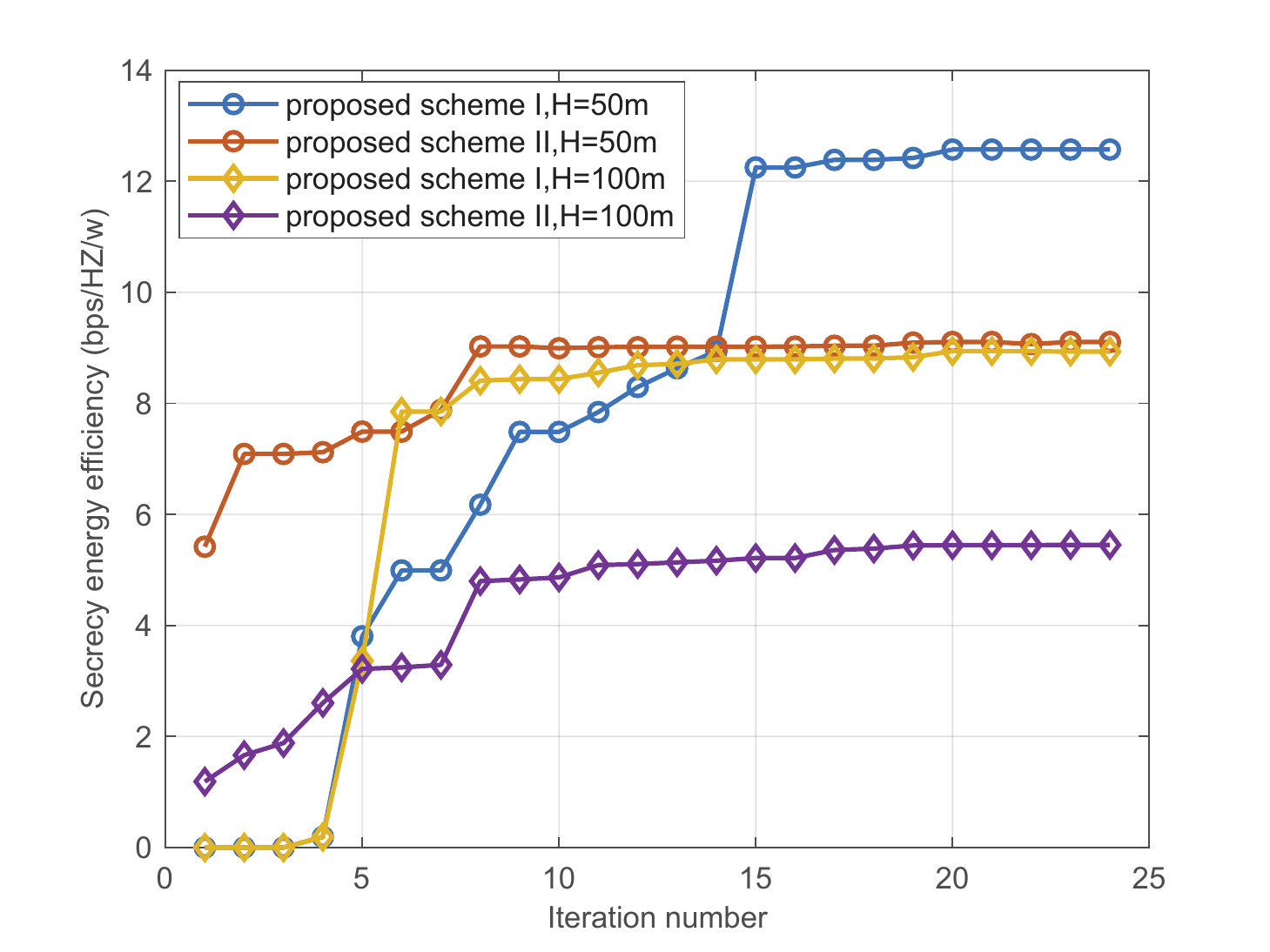}
	\caption{Convergence behavior with different
		optimization schemes.} \label{figcompare2}
	\vspace{-1em}
\end{figure}
In this section, we analyze the performance of the proposed algorithm through numerical results. Considering a square area of  300 m $\times$ 300 m with the BS located at the center. There are 4 users located at the four vertices of the square area with the coordinates $(\pm 300, \pm300)$ m, and the location of Eve is (0, 200) m. The flight period $T$ is fixed to 80 s, and the time slot is taken as $N$ = 12. We set the pathloss exponent $\alpha$ = 2.2,  the RIS's elements $M = 10$, $d/\lambda = 0.5$, the channel gain $h_0 = -80$ dB, noise power $\sigma^2$ = -120 dBm, and the system circuit power $P_0 = $ 1 W. Each user's  maximum transmit power $P_k$ is set to 1 W. The initial trajectory points of UAV flight trajectory are equally spaced on a polygon.

Fig. \ref{figcompare2} compare the convergence
performance between the proposed scheme I and scheme II  in different heights. As shown in Fig. \ref{figcompare2}, the performance gap between Scheme I and Scheme II shows the secrecy energy efficiency gain brought by the proposed scheme I. This is because that too many approximations are applied (e.g. the cosine of the AoA $\phi_k[n]$, $\phi_b[n]$, $\phi_e[n]$ in \eqref{newphik}-\eqref{newphie}, and the first order Taylor expansion in \eqref{eq48} and \eqref{eq49} in the proposed optimization scheme II.  Hence, we only consider the scheme I in the following simulation.

Fig. \ref{figN16} shows the UAV optimal trajectory solved by the proposed algorithm. In this simulation, the maximal UAV speed $V_{\max}$ is 50 m/s, which means that the maximal horizontal distance $S_{\max}$ in Fig. \ref{figN16} is 333.3 m. The UAV flight height $H$ is fixed at 100 m. The UAV flight trajectory can be optimized by Algorithm 2 to achieve the secrecy rate within the $S_{\max}$.
It is observed that UAV visits all user sequentially and its trajectory is a closed loop. On one hand, it is worth noting that UAV travels less distance when approaching user 4 and user 1, comparing with the initial trajectory. This shows that the UAV can establish communication with the users away from eavesdroppers Eve and achieve high secrecy rates. On the other hand, some points are marked with black in the optimal trajectory, which indicates that in these time slots there is no communication between any user and the BS. This is because UAV at these black points is close to the eavesdropper Eve which can not establish secure communication with users.

Fig. \ref{figN16v30} presents the trajectory when the maximal UAV speed $v_{\max}$ is fixed at 30 m/s, and the maximal horizontal distance $S_{\max}$ = 200 m. It can be found that the trajectory is smaller than the trajectory at $v_{\max} = 50$ m/s, which means with the decrease of the maximal UAV speed $v_{max}$, UAV will adjust it trajectory and fly closer to BS to achieve secure transmission of confidential information. Similar to the situation $v_{\max} = 50$ m/s, when UAV is close to the eavesdropper Eve which is marked with black points in Fig \ref{figN16v30}, UAV keeps silence to avoid information leakage.
\begin{figure}[t]
	\centering	
	\includegraphics[width=3.5in]{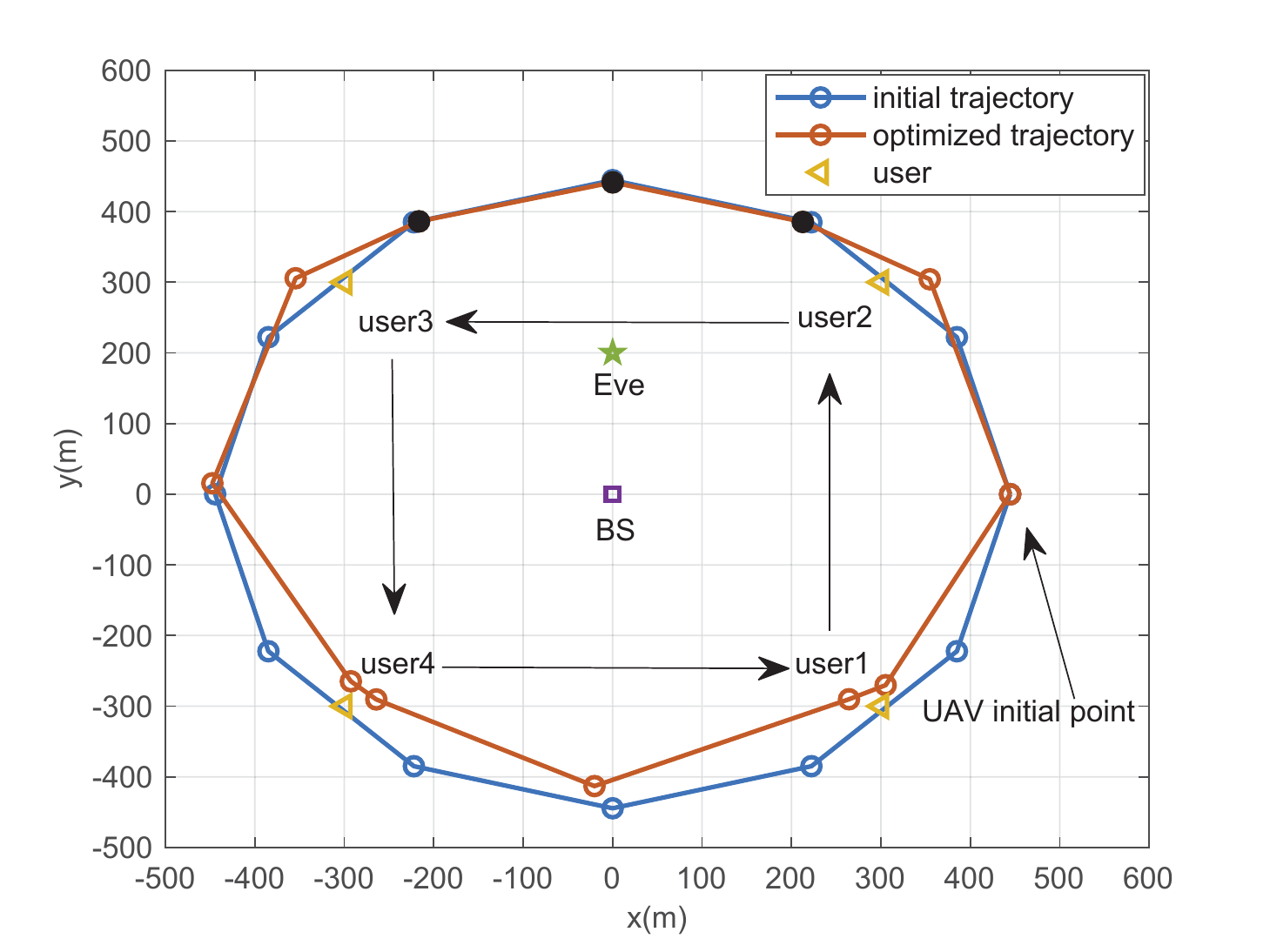}
	\caption{Optimized UAV trajectory by Algorithm 2 ($v_{\max} = 50$ m/s).} \label{figN16}
\end{figure}
\begin{figure}[t]
	\centering	
	\includegraphics[width=3.5in]{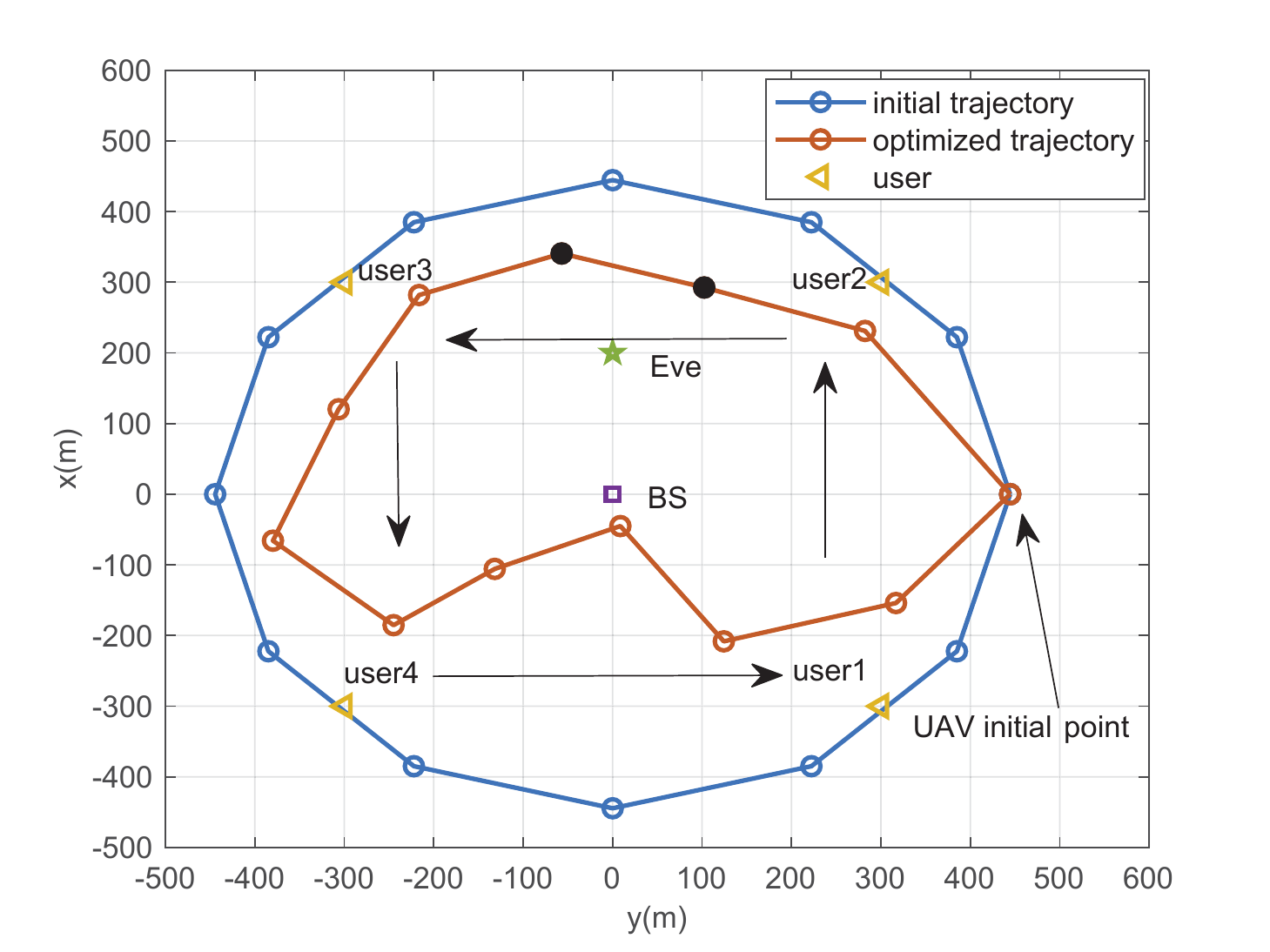}
	\caption{Optimized UAV trajectory by Algorithm 2 ($v_{\max} = 30$ m/s).} \label{figN16v30}
\end{figure}
\begin{figure}[t]
	\centering
	\includegraphics[width=3.5in]{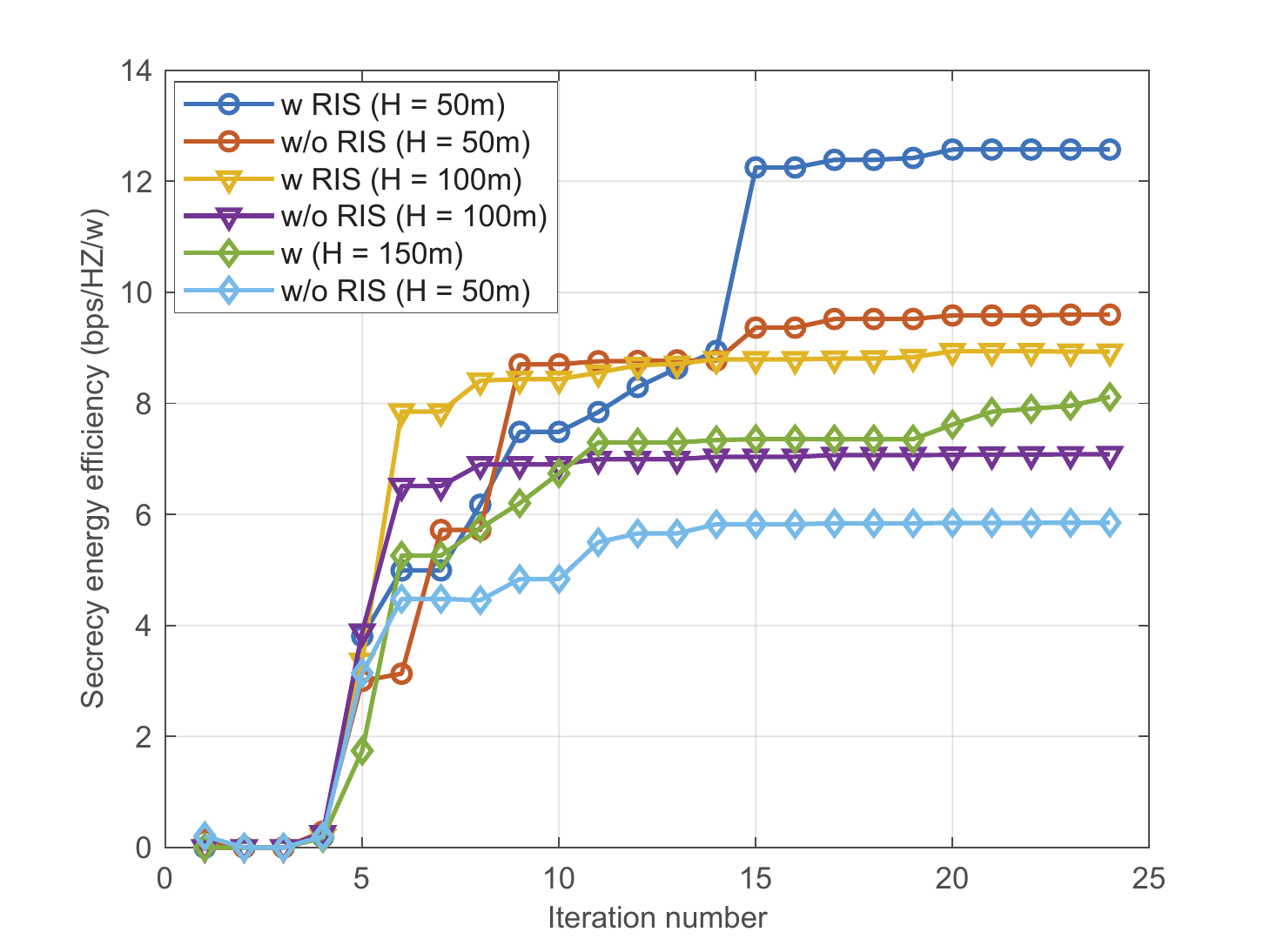}
	\caption{Convergence behavior versus the UAV's flying height $H$.} \label{figcompareH}
\end{figure}
\begin{figure}[t]
	\centering
	\includegraphics[width=3.5in]{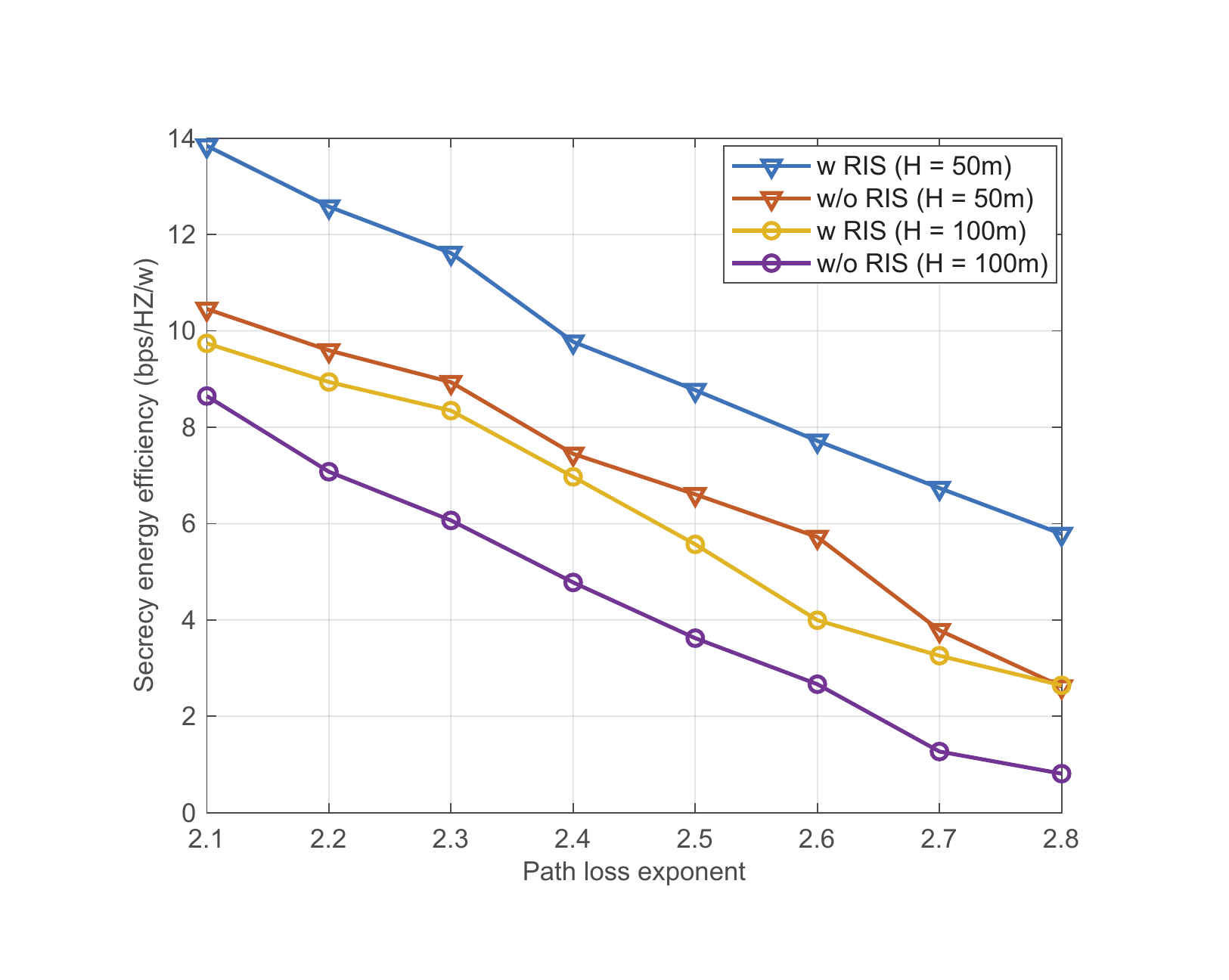}
	\vspace{-1em}
	\caption{Secrecy energy efficiency behavior versus the pathloss exponent $\alpha$.} \label{figalpha}
	\vspace{-1em}
\end{figure}
\begin{figure}[t]
	\centering
	\includegraphics[width=3.5in]{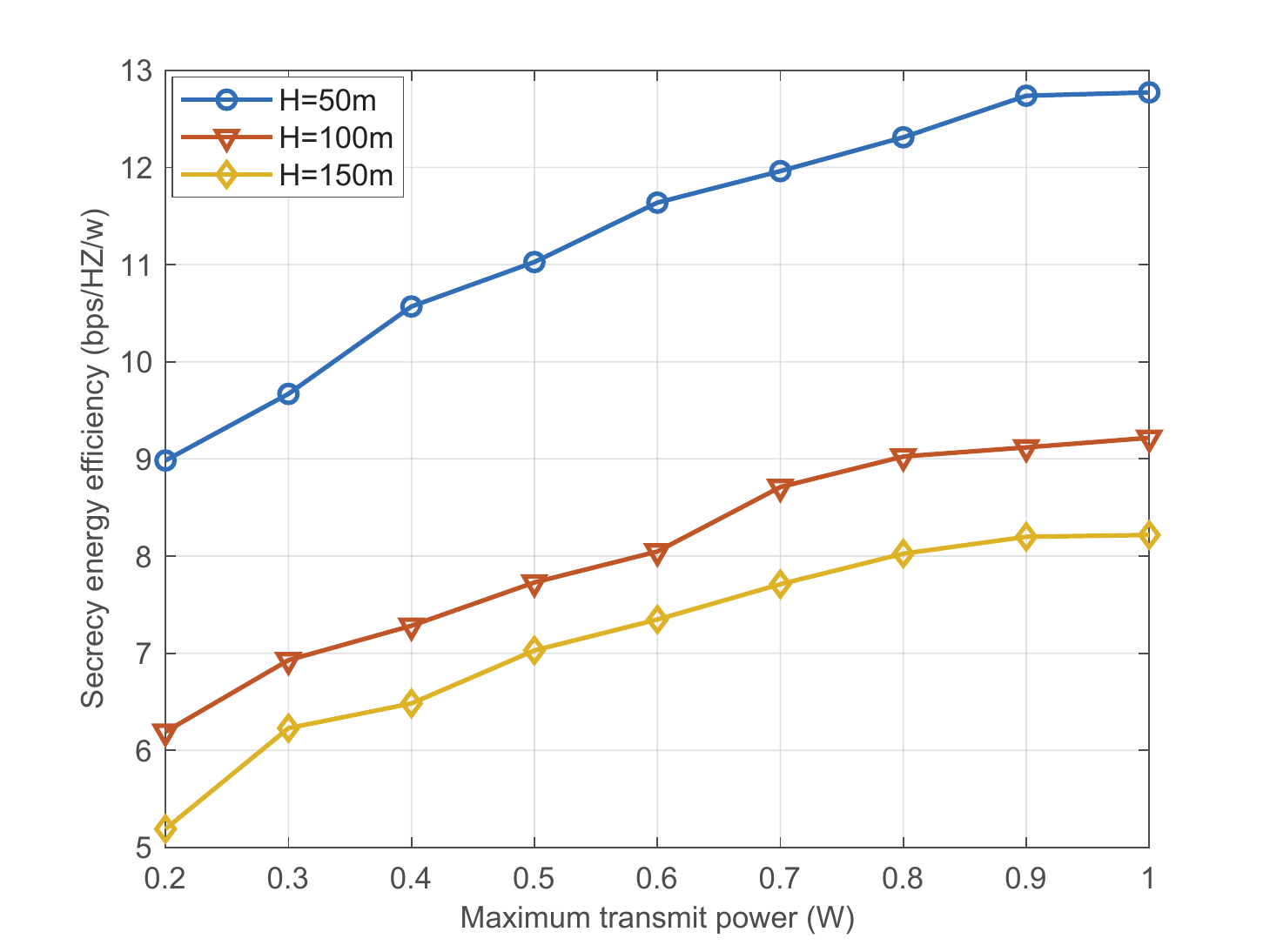}
	\vspace{-1em}
	\caption{Secrecy energy efficiency behavior versus the maximum transmit power.} \label{figp}
	\vspace{-1em}
\end{figure}
\begin{figure}[t]
	\centering
	\includegraphics[width=3.5in]{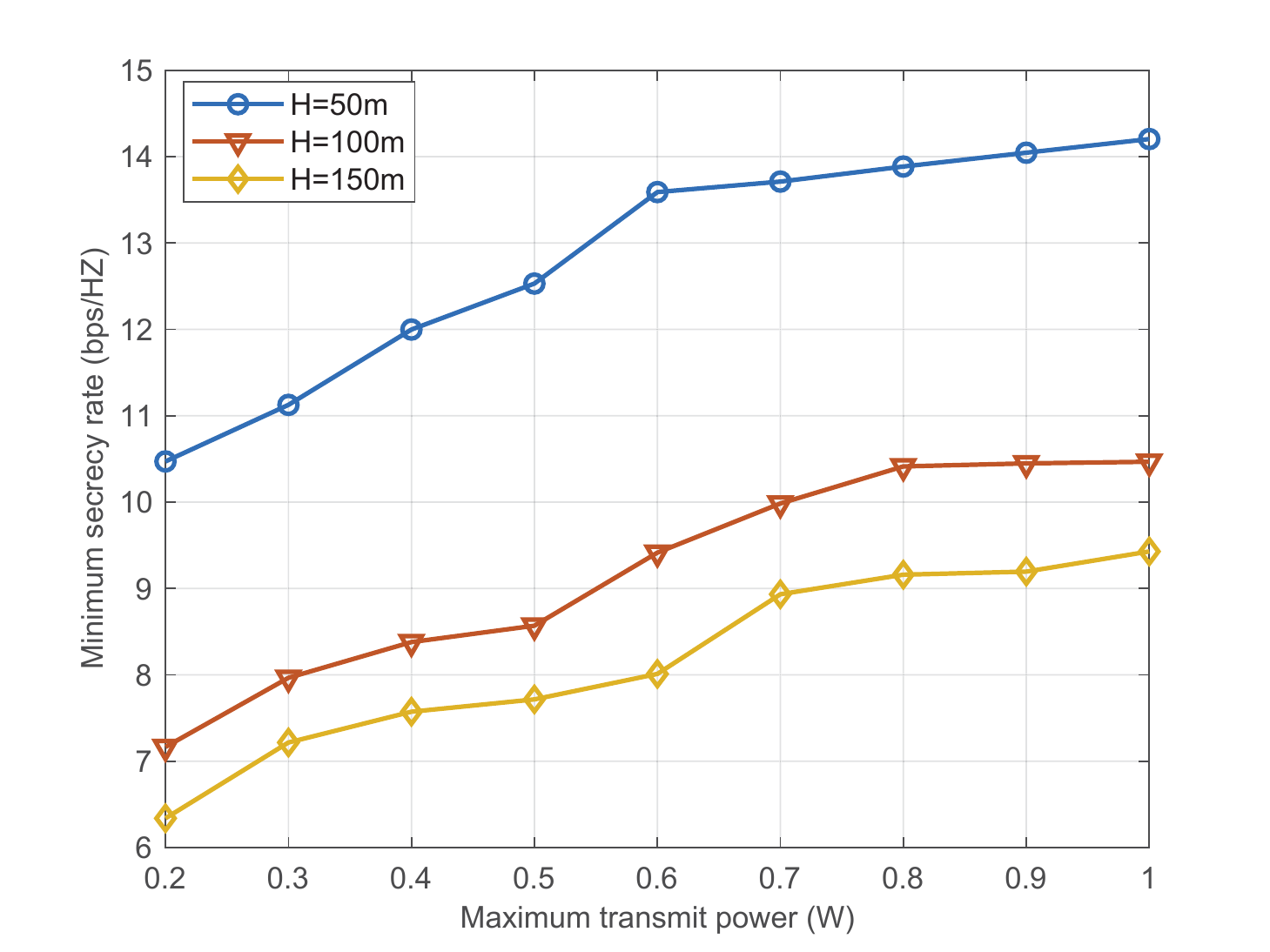}
	\vspace{-1em}
	\caption{Minimum secrecy rate behavior versus the maximum transmit power.} \label{figp2}
	\vspace{-1em}
\end{figure}

Fig. \ref{figcompareH} illustrates the convergence of the Algorithm 2 in different heights under 2 two case: with RIS (labeled 'w RIS') and the conventional AF relay
scheme without RIS  (labeled 'w/o RIS') mentioned in \cite{li2019joint}, respectively. In w/o RIS case, the UAV works as an AF relay, and its wireless transmit power $P_{UAV}$ is set to 0.2 W. It can be seen that Algorithm 2 converges fast which reveals the effectiveness of the proposed algorithm. Besides, Fig. \ref{figcompareH} also shows the relationship between the secrecy energy efficiency and the UAV flight height: the secrecy energy efficiency decreases with the increase of height. This is because with the increase of the distance of UAV-BS and UAV-users, the channel gain  decreases subsequently. Thus, the achievable secrecy rates reduce correspondingly. In addition, by comparing the RIS's performance with no-RIS in Fig. \ref{figcompareH}, the advantage of RIS in improving energy efficiency is obvious. Obviously, the UAV equipped with RIS achieves higher secrecy energy efficiency compared to the case without RIS.

The secrecy energy efficiency versus the pathloss exponent $\alpha$ is shown in Fig. \ref{figalpha}. In this figure, we can see that the energy efficiency of different UAV flying height decreases as the pathloss exponent $\alpha$ increases. This is because that both the channel gain between the UAV and BS/Eve is a decreasing function of the pathloss exponent $\alpha$. It is also demostrated that the lower flying height achieves better performance than the higher flying height. The reason is that the channel gain is a decreasing function of the distance of UAV-BS and UAV-users. When the flying height get larger, the achievable rate decreases correspondingly, as well the secrecy energy efficiency.

In Fig. \ref{figp}, we study the variation of the secure energy efficiency performance in different heights with the maximum transmit power of users (we assume the maximum transmit power $P_k$ for user $k$ are all the same) varies. It can be seen that secure energy efficiency first increases fast and then the increasing speed is slower. This is because secure energy efficiency is an non-decreasing function of the maximum transmit power $P_k$. As shown in Fig. \ref{figp}, when $P_k >$ 0.9 W ($H = 50$ m) or $P_k >$ 0.8 W ($H = 100$ m and $H = 150$ m) the secrecy energy efficiency keeps stable at a certain value. The result can be explained as follow. When maximum transmit power $P_k$ gets larger, the exceed transmit power is not used and it will not increase the energy efficiency. The result indicates that providing more transmit power for the system does not always obtain additional secure energy efficiency gains, and  reasonable designing maximum transmit power will save energy.  Fig. \ref{figp2} presents how the minimum secrecy rate changes with the increase of transmit power. It is obvious that when the transmit power becomes larger, e.g. $P_k>$ 0.7 W, the minimum secrecy rate increases slower than before. That means we do not need as much transmit power as possible to achieve the minimum secrecy rate.

In the final set of experiments shown in Fig. \ref{figM}, We compare the secrecy energy efficiency performance of the proposed scheme with RIS (labeled 'w RIS') and the traditional AF scheme without RIS (labeled 'w/o RIS'). First, Fig. \ref{figM}  shows secrecy energy efficiency in w RIS case increases as the number of reflecting elements $M$ becomes large. Second, we can see that the w RIS case performs better with the increase of reflecting elements compared to w/o RIS case. This is reasonable as RIS is an passive reflecting structure and does not cost any specific energy. In contrast, in w/o RIS case UAV needs extra energy to relay the received signal to BS which increase the energy consumption. From Fig. \ref{figM}, the proposed scheme can increase up to 49.5\% ($H = 50$ m) and 56.7\% ($H = 100$ m) compared to the traditional AF scheme, which demonstrates the RIS's characteristic in improving energy efficiency.

\begin{figure}[t]
	\centering
	\includegraphics[width=3.5in]{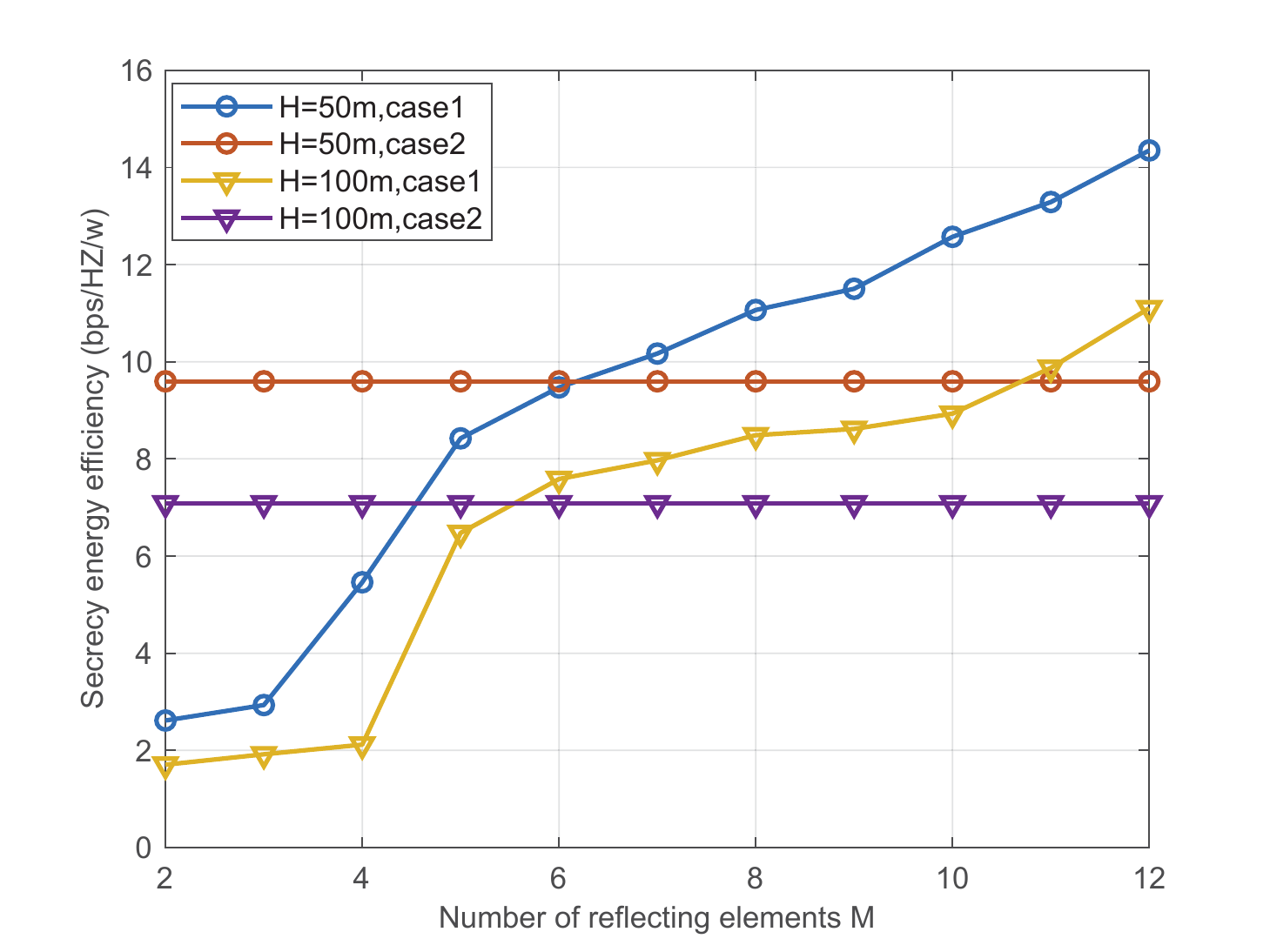}
	\caption{Secrecy energy efficiency behavior versus the number of reflecting elements $M$.} \label{figM}
\end{figure}

\section{Conclusion}
 In this paper, we have investigated an  UAV  equipped with a RIS wiretap wireless network. The RIS's phase shifts, UAV association, trajectory and the user's transmit power have been jointly optimized to maximize the system secrecy energy efficiency. To solve this problem, we have proposed an efficient iterative algorithm by applying the SCA and alternating methods. In particular, we proposed two optimization schemes to solve the joint phase and UAV trajectory optimization problem. Numerical results have shown the fast convergence of the proposed algorithm, and our proposed method provides an excellent plan for the outdoor communication scene with obstacles.

	\begin{appendices}
		\section{Proof of Lemma 1  }
		Lemma 1 is proved by the definition of convex function. First, the first-order partial derivatives of $r_k[n]$ with respect to $z_k[n]$ is given as
		\begin{align}
		\frac{\partial}{\partial z_k[n]}\left\lbrace \log_2\left(1+\frac{A}{z_k[n]}\right)\right\rbrace = \frac{-1}{\ln2}\frac{A}{\left( z_k[n]+A\right)z_k[n] }
		\end{align}
		Then, the second-order partial derivative of of $r_k[n]$ with respect to $z_k[n]$ is given as
		\begin{align}
		\frac{\partial}{\partial ^2 z_k[n]}\left\lbrace \log_2\left(1+\frac{A}{z_k[n]}\right)\right\rbrace ^2= \frac{1}{\ln2} \frac{A^2+2Az_k[n]}{z^2_k[n]\left(z_k[n]+A\right)^2}
		\end{align}
		Since $z_k[n]>0$ and $A>0$, the second-order partial derivative $\frac{\partial r_k[n]^2}{\partial ^2z_k[n]}>0$. Thus, $r_k[n]$ is a convex function of $z_k[n]$.
		\section{Proof of Lemma 2  }
		
		Lemma 2 is proved by the definition of convex function.
		As $d_{k}[n]=\sqrt{\|\boldsymbol q[n]-\boldsymbol  w_k\|^2+H^2}$, we can get
		\begin{align}
		\notag
		d_k^2[n]&=\|\boldsymbol q[n]-\boldsymbol w_k\|^2+H^2\\
		&=\left(x[n]-x_k\right)^2+\left(y[n]-y_k\right)^2+H^2.
		\end{align}
		The first-order partial derivatives of $d_k^2[n]$ with respect to $x[n]$ and $y[n]$ are given by
		\begin{equation}
		\frac{\partial d_k^2[n]}{\partial x[n]}=2\left( x[n]-x_k\right),
		\end{equation}
		\begin{equation}
		\frac{\partial d_k^2[n]}{\partial y[n]}=2\left( y[n]-y_k\right).
		\end{equation}
		So we can get the the first-order derivatives of $d_k^2[n]$ with respect to $\boldsymbol q[n]$ is given as
		\begin{align}
		\frac{\partial d_k^2[n]}{\partial \boldsymbol{q}[n]}=
		\left(
		\begin{array}{c}
		\frac{\partial d_k^2[n]}{\partial x[n]}\\
		\\
		\frac{\partial d_k^2[n]}{\partial y[n]}
		\end{array}
		\right)
		=\left(
		\begin{array}{c}
		2\left( x[n]-x_k\right) \\
		\\
		2\left( y[n]-y_k\right)
		\end{array}
		\right) = 2(\boldsymbol{q}[n]-\boldsymbol w_k).
		\end{align}
		It can be easily know the the Hessian of $d_k^2[n]$ is
		\begin{equation}
		\bigtriangledown d_k^2[n]=
		\left[
		\begin{array}{ccc}
		2 & 0\\0 & 2
		\end{array}
		\right].
		\end{equation}
		Obviously, it is positive definite. Consequently, $d_k^2[n]$ is a convex function.
		
		The first-order partial derivatives of $d_k^4[n]$ with respect to $x[n]$ and $y[n]$ are given by
		\begin{equation}
		\frac{\partial d_k^4[n]}{\partial x[n]}=2d_k^2[n]\frac{\partial d_k^2[n]}{\partial x[n]}=4d_k^2[n]\left( x[n]-x_k\right),
		\end{equation}
		\begin{equation}
		\frac{\partial d_k^4[n]}{\partial y[n]}=2d_k^2[n]\frac{\partial d_k^2[n]}{\partial y[n]}=4d_k^2[n]\left( y[n]-y_k\right).
		\end{equation}
		The second-order partial derivatives of $d_k^4[n]$ with respect to $x[n]$ and $y[n]$ are given by
		\begin{equation}
		\frac{\partial \left( d_k^4[n]\right)^2}{\partial ^2 x[n]}=4\left( d_k^2[n]+2\left( x[n]-x_k\right)^2 \right),
		\end{equation}
		\begin{equation}
		\frac{\partial \left( d_k^4[n]\right)^2}{\partial ^2 y[n]}=4\left( d_k^2[n]+2\left( y[n]-y_k\right)^2 \right),
		\end{equation}
		\begin{equation}
		\frac{\partial \left( d_k^4[n]\right)^2}{\partial  x[n]\partial  y[n]}=8\left( x[n]-x_k\right) \left( y[n]-y_k\right) = \frac{\partial \left( d_k^4[n]\right)^2}{\partial  y[n]\partial  x[n]},
		\end{equation}
		\begin{align}\label{app1}
		\notag
		&\frac{\partial \left( d_k^4[n]\right)^2}{\partial ^2 x[n]}\frac{\partial \left( d_k^4[n]\right)^2}{\partial ^2 y[n]}-\frac{\partial \left( d_k^4[n]\right)^2}{\partial  x[n]\partial  y[n]}\frac{\partial \left( d_k^4[n]\right)^2}{\partial  y[n]\partial  x[n]}\\
		=&16d_k^4[n]+32\left( x[n]-x_k\right)^2\left( y[n]-y_k\right)^2>0.
		\end{align}
		Since $\frac{\partial \left( d_k^4[n]\right)^2}{\partial ^2 x[n]}>0$ and \eqref{app1}, $d_k^4[n]$ is a convex function.
		\section{Joint Phase and UAV Trajectory Optimization (scheme II)}
		
		To reveal the hidden convexity in $\boldsymbol g_{b}^H[n]\boldsymbol\Theta[n] \boldsymbol g_k[n]$ and $\boldsymbol g_{b}^H[n]\boldsymbol\Theta[n] \boldsymbol g_k[n]$,  we first introduce two slack variables $ u_k[n]=\phi_e[n]-\phi_k[n] $, $l_k[n]=\phi_b[n]-\phi_k[n] $. To make problem more tackleable, we introduce an approximation as follow,
		\begin{equation}\label{newphik}
		\phi_k[n]=\frac{x[n]-x_k}{d_{k}[n]}\approx\frac{x[n]-x_k}{d_{k}^{(r-1)}[n]},
		\end{equation}
		\begin{equation}\label{newphib}
		\phi_b[n]=\frac{x[n]-x_b}{d_{b}[n]}\approx\frac{x[n]-x_b}{d_{b}^{(r-1)}[n]},
		\end{equation} 
		and
		\begin{equation}\label{newphie}
		\phi_e[n]=\frac{x[n]-x_e}{d_{e}[n]}\approx\frac{x[n]-x_e}{d_{e}^{(r-1)}[n]},
		\end{equation} 
		where $d_{k}^{(r-1)}[n],d_{b}^{(r-1)}[n],d_{e}^{(r-1)}[n]$ is the solution in the $(r-1)$-th iteration. It can be found that both $u_k[n]$ and $l_k[n]$ are the linear functions of $x[n]$ as well as $q[n]$ based on \eqref{newphik}-\eqref{newphie}. Note that the approximations \eqref{newphik}-\eqref{newphie} are reasonable as the distance between the users and the UAV is large and it will not change a lot in the  $(r-1)$-th iteration. To further solve the problem, we set $t_{mk}[n]=\theta_m[n]+\frac{2(m-1)\pi d}{\lambda}u_k[n]$,  $s_{mk}[n]=\theta_m[n]+\frac{2(m-1)\pi d}{\lambda}l_k[n]$. It is obvious that $t_{mk}[n]$, and $s_{mk}[n]$ are linear functions of variables $\theta_m[n]$, $u_k[n]$ and $l_k[n$], Consequently, we can obtain
		
		\begin{align}\label{eq48}	
		\notag
		&\left|\sum\limits_{m=1}^Me^{jt_{mk}[n]}\right|^2\\
		\notag
		&=\left[\sum\limits_{m=1}^Mcos(t_{mk}[n]) \right]^2 +\left[\sum\limits_{m=1}^Msin(t_{mk}[n]) \right]^2\\
		\notag
		&\approx \left[\sum\limits_{m=1}^Mcos(t_{mk}^{(r)}[n])\right]^2+\left[\sum\limits_{m=1}^Msin(t_{mk}^{(r)}[n]) \right]^2  \\
		\notag
		&-2\sum\limits_{m=1}^M\left(\sum\limits_{i=1}^Mcos(t_{ik}^{(r)}[n]) \right)sin(t_{mk}^{(r)}[n])\left( t_{mk}[n]-t_{mk}^{(r)}[n]\right)   \\
		\notag
		&+2\sum\limits_{m=1}^M\left(\sum\limits_{i=1}^Msin(t_{ik}^{(r)}[n]) \right)cos(t_{mk}^{(r)}[n])\left( t_{mk}[n]-t_{mk}^{(r)}[n]\right)\\ 
		&\triangleq g_{ek}[n]
		\end{align}
		and
		\begin{align}\label{eq49}	
		\notag
		&\left|\sum\limits_{m=1}^Me^{js_{mk}[n]}\right|^2\\
		\notag
		&=\left[\sum\limits_{m=1}^Mcos(s_{mk}[n]) \right]^2 +\left[\sum\limits_{m=1}^Msin(s_{mk}[n]) \right]^2\\
		\notag
		&\approx \left[\sum\limits_{m=1}^Mcos(s_{mk}^{(r)}[n])\right]^2+\left[\sum\limits_{m=1}^Msin(s_{mk}^{(r)}[n]) \right]^2  \\
		\notag
		&-2\sum\limits_{m=1}^M\left(\sum\limits_{i=1}^Mcos(s_{ik}^{(r)}[n]) \right)sin(s_{mk}^{(r)}[n])\left( s_{mk}[n]-s_{mk}^{(r)}[n]\right)   \\
		\notag
		&+2\sum\limits_{m=1}^M\left(\sum\limits_{i=1}^Msin(s_{ik}^{(r)}[n]) \right)cos(s_{mk}^{(r)}[n])\left( s_{mk}[n]-s_{mk}^{(r)}[n]\right)\\ 
		&\triangleq g_{bk}[n].
		\end{align}
		Note that the approximations in \eqref{eq48} and \eqref{eq49} are derived based on the first-order Taylor expansions, and both $g_{bk}[n]$  and  $g_{ek}[n]$ are linear functions with respect to $t_{mk}[n]$ and $s_{mk}[n]$.
		
		Two slack variables $\boldsymbol{G_e} = \left\{g_{ek}[n]\right\}_{\forall k,n}$,$\boldsymbol{G_b} = \left\{g_{bk}[n]\right\}_{\forall k,n}$ are introduced to assist the problem solution. Thus, we get
		\setcounter{equation}{51} 
		\begin{align}
		c_k[n] &= \log_2\left( 1+\frac{h_0^2g_{ek}[n]}{d_k^a[n]d_e^a[n]}\right) \\
		\notag
		&=\log_2\left( d_k^a[n]d_e^a[n]+h_0^2g_{ek}[n]\right) -\log_2\left( d_k^a[n]d_e^a[n]\right)
		\end{align}
		and
		\begin{align}
		r_k[n] &= \log_2\left( 1+\frac{h_0^2g_{bk}[n]}{d_k^a[n]d_b^a[n]}\right)\\
		\notag
		&=\log_2\left( d_k^a[n]d_b^a[n]+h_0^2g_{eb}[n]\right) -\log_2\left( d_k^a[n]d_b^a[n]\right),
		\end{align}
		where $g_{ek}[n]$ and $g_{eb}[n]$ both contain the variable $q[n]$. We introduce another two slack variables $\boldsymbol{Z}=\left\{z_k[n]\right\}_{\forall k,n}$ and $\boldsymbol{V} =  \left\{v_k[n]\right\}_{\forall k,n} $. 
		We reformulate the problem as
		\begin{subequations}\label{newmaxq1}
			\begin{align}
			\!\!&\mathop{\max}_{\zeta,\boldsymbol{Q},\boldsymbol{\Theta},\boldsymbol{Z},\boldsymbol{V},\boldsymbol{U},\boldsymbol{L},\boldsymbol{G_{e}},\boldsymbol{G_b}}\quad
			\zeta\\
			\textrm{s.t.}\quad\:\:
			& \zeta \leq\frac 1 N \sum_{n=1}^Na_k[n]\left[r_k[n]-c_k[n]\right]^{+},\forall k\in\mathcal K\label{newmaxq2_1}\\
			&\|\boldsymbol q[n]-\boldsymbol q[{n-1}]\| \leq S_{\max}, \quad \forall n\in\mathcal N\\
			& \boldsymbol q[N]=\boldsymbol q[0]\label{newmaxq2_2}\\
			&z^{\frac{2}{\alpha}}_k[n]\geq d_k^2[n]d_b^2[n], \quad \forall n\in\mathcal N,k\in\mathcal K\label{newmaxq2_3}\\
			&v^{\frac{2}{\alpha}}_k[n]\leq d_k^2[n]d_e^2[n], \quad \forall n\in\mathcal N,k\in\mathcal K\label{newmaxq2_4}\\
			&u_k[n]=\phi_e[n]-\phi_k[n],  \quad \forall n\in\mathcal N,k\in\mathcal K\label{newmaxq2_5}\\ 
			&l_k[n]=\phi_b[n]-\phi_k[n],  \quad \forall n\in\mathcal N,k\in\mathcal K\label{newmaxq2_6}\\
			&\theta_{m}[n] \in[0,2\pi],   \quad  \forall   m \in\mathcal M,n\in\mathcal N\label{newmaxq2_9}\\
			&\eqref{eq48},\eqref{eq49},\quad \forall n\in\mathcal N,k\in\mathcal K.
			\end{align}
		\end{subequations}
		
		To handle the nonconvexity of \eqref{newmaxq2_1}, we have
		\begin{align}
		R_k[n]&=r_k[n]-c_k[n]\\
		\notag
		& = \log_2\left( z_k[n]+h_0^2g_{bk}[n]\right) -\log_2\left( z_k[n]\right)\\
		\notag
		&-\log_2\left( v_k[n]+h_0^2g_{ek}[n]\right) +\log_2\left( v_k[n]\right)\\
		\notag
		&\geq \log_2\left( z_k[n]+h_0^2g_{bk}[n]\right)+\log_2\left( v_k[n]\right)\\
		\notag
		&-\log_2\left( z_k^{(l)}[n]\right)-\frac{1}{\ln2 z_k^{(l)}[n]}\left(z_k[n]-z_k^{(l)}[n] \right)\\
		\notag
		&-\log_2\left( v_k^{(l)}[n]+h_0^2g_{ek}^{(l)}[n]\right)\\
		\notag
		&-\frac{1}{(\ln2)(v_k^{(l)}[n]+h_0^2g_{ek}^{(l)}[n])}\left( v_k[n]-v_k^{(l)}[n]\right)\\
		\notag
		&-\frac{h_0^2}{(\ln2)(v_k^{(l)}[n]+h_0^2g_{ek}^{(l)}[n])}\left(g_{ek}[n]-g_{ek}^{(l)}[n]\right)\\
		&\triangleq \hat{R}_k[n].
		\end{align}	
	
	To handle the nonconvexity of \eqref{newmaxq2_3} and \eqref{newmaxq2_4}, the technique applied is the same as the \eqref{maxq7}-\eqref{maxq9}.
		Hence, we can solve the following problem
		\begin{subequations}\label{newmaxqall1_1}
			\begin{align}
			\!\!&\mathop{\max}_{\zeta,\boldsymbol{Q},\boldsymbol{\Theta},\boldsymbol{Z},\boldsymbol{V},\boldsymbol{U},\boldsymbol{L},\boldsymbol{G_e},\boldsymbol{G_b}}\quad
			\zeta\\
			\textrm{s.t.}\quad\:\:
			& \zeta \leq\frac 1 N \sum_{n=1}^Na_k[n]\max\left\{\hat{R}_k[n],0\right\},\forall k\in\mathcal K\label{newmaxq1_1}\\
			&\|\boldsymbol q[n]-\boldsymbol q[{n-1}]\| \leq S_{\max}, \quad \forall n\in\mathcal N\label{newmaxq1_2}\\
			& \boldsymbol q[N]=\boldsymbol q[0]\label{newmaxq1_3}\\
			&z^{\frac{2}{\alpha}}_k[n]\geq f\left(\boldsymbol{q}[n]\right)\quad \forall n\in\mathcal N,k\in\mathcal K\label{newmaxq1_4}\\
			&h\left(v_k[n]\right)\leq g\left(\boldsymbol{q}[n]\right)\quad \forall n\in\mathcal N,k\in\mathcal K\label{newmaxq1_5}\\
			&u_k[n]=\phi_e[n]-\phi_k[n],  \forall n\in\mathcal N,k\in\mathcal K\label{newmaxq1_6}\\
			&l_k[n]=\phi_b[n]-\phi_k[n],  \forall n\in\mathcal N,k\in\mathcal K \label{newmaxq1_7}\\
			&\theta_{m}[n] \in[0,2\pi],   \quad  \forall   m \in\mathcal M,n\in\mathcal N\label{newmaxq1_10}\\
			&\eqref{eq48},\eqref{eq49},\quad \forall n\in\mathcal N,k\in\mathcal K.
			\end{align}
		\end{subequations}
	Since \eqref{newmaxq1_6}, \eqref{newmaxq1_7} and \eqref{newmaxq1_10} are all linear constraints and other constrains are convex, problem \eqref{newmaxqall1_1} is a convex optimization problem which can be efficiency solved by standard convex optimization solver. As too many approximation are applied in the proposed optimization scheme II, the optimal objective value obtained from problem \eqref{newmaxqall1_1} can serve as a lower bound of problem \eqref{maxq}.
		
	\section{Proof of Lemma 3  }
	
The proof is eastablished by showing that the secrecy energy efficiency is non-decreasing after each iteration. Considering in the $t$-th iteration, we proved following update rules based on Algorithm 2
	\begin{align}\label{convergence}
	\notag
	\Gamma^{(t)}&=f(\boldsymbol{A}^{(t)},\boldsymbol{P}^{(t)},\boldsymbol{Q}^{(t)},\boldsymbol{\Theta}^{(t)})\\
	\notag
	&\overset{(a)}{\leq}f(\boldsymbol{A}^{(t+1)},\boldsymbol{P}^{(t)},\boldsymbol{Q}^{(t)},\boldsymbol{\Theta}^{(t)})\\
	\notag
	&\overset{(b)}{\leq}f(\boldsymbol{A}^{(t+1)},\boldsymbol{P}^{(t+1)},\boldsymbol{Q}^{(t)},\boldsymbol{\Theta}^{(t)})\\
	\notag
	&\overset{(c)}{\leq}f(\boldsymbol{A}^{(t+1)},\boldsymbol{P}^{(t+1)},\boldsymbol{Q}^{(t+1)},\boldsymbol{\Theta}^{(t+1)})\\
	&=\Gamma^{(t+1)}
	\end{align}
where $\Gamma=f(\boldsymbol{A},\boldsymbol{P},\boldsymbol{Q},\boldsymbol{\Theta})=\frac{\zeta}{\sum_{k=1}^K \sum_{n=1}^N p_k[n]+P_0}$. Inequality $(a)$ follows that in step 3 in Algorithm 2 problem \eqref{maxa} is solved optimally with solution $\boldsymbol{A}^{(t+1)}$. In the proposed Algorithm 2, for the power control optimization problem \eqref{maxp} and trajectory and phift optimization \eqref{maxq}, we only solved the their approximate problem \eqref{maxpnew2}  and problem \eqref{maxq8}. Define $\Gamma_{pow}=f_{pow}(\boldsymbol{A},\boldsymbol{P},\boldsymbol{Q},\boldsymbol{\Theta})$ , where $\Gamma_{pow}$ is respectively the objective values of problem \eqref{maxpnew2}. The inequality $(b)$ can be explained as follow
\begin{small}
\begin{align}
\notag
f(\boldsymbol{A}^{(t+1)},\boldsymbol{P}^{(t)},\boldsymbol{Q}^{(t)},\boldsymbol{\Theta}^{(t)})&\overset{(d)}= f_{pow}(\boldsymbol{A}^{(t+1)},\boldsymbol{P}^{(t)},\boldsymbol{Q}^{(t)},\boldsymbol{\Theta}^{(t)})\\
\notag
&\overset{(e)}{\leq} f_{pow}^{lb}(\boldsymbol{A}^{(t+1)},\boldsymbol{P}^{(t+1)},\boldsymbol{Q}^{(t)},\boldsymbol{\Theta}^{(t)})\\
&\overset{(f)}{\leq}f(\boldsymbol{A}^{(t+1)},\boldsymbol{P}^{(t+1)},\boldsymbol{Q}^{(t)},\boldsymbol{\Theta}^{(t)})
\end{align}
\end{small}
where inequality $(d)$ holds since the SCA method in \eqref{sca}, which means problem \eqref{maxpnew2} at the $p_k^{(r)}$ has the same objective value as problem \eqref{maxp}; inequality $(e)$ follows that the step 4 of Algorithm 2 with given $\boldsymbol{A}^{(t+1)}$, $\boldsymbol{Q}^{(t)}$ and $\boldsymbol{\Theta}^{(t)}$, $\boldsymbol{P}^{(t+1)}$ is the optimal solution of problem \eqref{maxpnew2};
inequality (f) is due to the fact that the objective vaule of problem \eqref{maxq8} is the lower bound of the original problem \eqref{maxq} at $\boldsymbol{P}^{(t+1)}$. The inequality $(c)$ can hold due to the fact that we set $Q^{(t+1)}=Q^{(t)}, \Theta^{(t+1)}=\Theta^{(t)}$ if $\Gamma^{(t+1)}< \Gamma^{(t)}$ in Algorithm 2.
Thus the object vaule is non-decreasing after each iteration of Algorithm 2. Furthermore, the objective vaule of problem \eqref{max1} is upper bounded by a finite value, the proposed Algorithm is guaranteed to converge.
	
\end{appendices}

\section*{Acknowledgments}
This work was supported by the National Natural Science Foundation of China (NSFC) under grant 61871128 and the Fundamental Research on Foreword Leading Technology of Jiangsu Province under grant BK20192002.

	\bibliographystyle{IEEEtran}
	\bibliography{IEEEabrv,MMM}

\end{document}